\newenvironment{mysummary}[1]{%
    \leftskip=2.5em \rightskip=2.5em
    \noindent\small{\bfseries #1}}
    {\par\medskip}
\renewenvironment{abstract}{\begin{mysummary}{Abstract:}}{\end{mysummary}}
\newenvironment{keywords}{\begin{mysummary}{Key words:}}{\end{mysummary}\medskip}
\newenvironment{proof}{\par{\itshape Proof}.\ }
    {\hfill\raisebox{.56ex}{\fbox{}}\par}
\newtheorem{theorem}{\indent Theorem}
\newtheorem{lemma}{\indent Lemma}
\newtheorem{definition}{\indent Definition}
\newtheorem{myremark}{\indent Remark}
\newenvironment{remark}{\begin{myremark}\normalfont}
    {\end{myremark}}
\newtheorem{myexample}{\indent Example}
\title{\textbf{Observer-Based Drag-Tracking Guidance for Entry Vehicles Considering Input Saturation Constraint}\thanks{This work is supported by National Natural Science Foundation of China (61873029, 61873250) and Beijing Natural Science Foundation (4192068).
}
}
\author{Han Yan\thanks{Senior engineer, E-mail addresses: yhustc@sina.com.}, Yingzi He\thanks{Professor.}, Chunling Wei\thanks{Professor.}\\[1ex]Science and Technology on Space
Intelligent Control Laboratory,\\
Beijing Institute of Control Engineering,
Beijing 100190, China}
\date{}
\begin{document}

\maketitle

\begin{abstract}
This paper studies the drag-tracking guidance design problem of uncertain entry vehicles. With employing a Nussbaum type function to deal with input saturation constraint, an output feedback guidance law (bank angle magnitude)  with a high-gain observer is constructed  that makes the drag-tracking error converge near zero in the presence of uncertainties. It is also worthy to claim that, in contrast to the existing results whose envelope of uncertainty merely depends on the drag error, the considered uncertainty is allowed to be not bigger than a function of drag error and integral term of drag error, which inevitably occurs in practice. The Monte Carlo simulation is done to illustrate  the advantage of the developed method.
\end{abstract}

\begin{keywords}
Entry vehicle, drag-tracking, input saturation, high-gain observer, robustness, integral feedback.
\end{keywords}

\section{Introduction}
Due to the advantages over predictor-corrector guidance in realization, reference-trajectory guidance has been extensively investigated and applied in practice. Lots of reference-trajectory tracking guidance synthesis strategies have been developed based on various methods, such as indirect Legendre pseudospectral method \cite{Tian_2011}, trajectory linearization control (TLC) approach \cite{ISA_2014,ISA_2015,AESCTE_2015}, and small-gain theorem \cite{AA_2017}. A  typical  reference-trajectory guidance is to make the vehicle track the drag profile that generated from the reference-trajectory, and  has also been validated in the Apollo and Shuttle Programs \cite{Shuttle}. The successful application further promotes the research on it. In order to improve robustness and guidance precision, many modern control methods have been used to design drag-tracking guidance law, including feedback linearization method \cite{Leavitt_2007,Mease_2002,Wang_2013,Saraf_2004,Talole_2007}, predictive control \cite{Guo,Lu_1997,Benito_2008}, and active disturbance rejection control (ADRC) \cite{XiaYuanQing}.

Most of drag-tracking guidance laws require the knowledge of drag rate, which is hard for a vehicle to measure accurately in practice. To deal with above challenge, the altitude rate is used as feedback instead of the drag rate in the Shuttle guidance, but it is error prone as mentioned in \cite{Shuttle}. Recently, several observer-based techniques have been  used to estimate the drag rate. The sliding mode state and perturbation observer is used in \cite{Talole_2007} to address the issue of estimation of the drag rate. In \cite{Xia_2013}, an extended state observer is introduced to estimate the drag rate and an extended state, and the ADRC algorithm is utilized to design a drag-tracking law. It is worthy to mention that the results in \cite{Talole_2007} and \cite{Xia_2013} are based on the assumption that the uncertainty is bounded. However, since the uncertainty term relies on drag and other states of vehicle, the boundedness cannot be guaranteed. In comparison, after analyzing the uncertainties, \cite{Yan} assumes that the uncertainty term is not bigger than a linear function of absolute value of drag-tracking error, and the input-to-state stability theory is applied to design a state feedback guidance law. Moreover, the high-gain observer is also used in \cite{Yan} to estimate the drag rate, recovering the performance of the state feedback law.  But strictly speaking, the assumption in \cite{Yan} maybe still not hold in practice because the uncertainty term is relative to velocity of vehicle.

In the entry process, the bank angle is the only control variable that can be modulated to eliminate the drag-tracking error, and since the magnitude of bank angle is limited, the guidance law should be designed in the presence of input saturation. However, to the best knowledge of the authors, there are few attempts are made on guidance law design for entry vehicles with magnitude constraints. In \cite{Wen_2011}, a robust control design method is proposed for single input uncertain nonlinear systems in the presence of input saturation. The Nussbaum function is introduced to compensate for the nonlinear term arising from the saturation constraints, and the stability analysis is given in the framework of backstepping scheme. Such an idea is also used in the integrated guidance and control design with taking the saturation of the actuators into account \cite{Liang_2015}.

In this paper, a general uncertainty term and input saturation is considered in modelling the  drag dynamics. As the drag rate maybe difficult for a vehicle to measure accurately, a high-gain observer is used in the designing of the guidance law without drag rate measurement. A Nussbaum type function is also introduced to deal with the input saturation, and the stability analysis shows that the guidance law can make the drag-tracking error converge sufficiently near zero. The contribution of the paper is summarized  as follows. First, comparing with our former work (i.e. \cite{Yan}), an auxiliary   integral term of drag error  is introduced in the guidance law, which also leads to the fact that the assumption on uncertainty term can be further weakened. Second, an output feedback guidance law with high-gain observer \cite{Khalil_1992,Khalil_1999} is designed  such that the drag-tracking error can converge to a small residual set around the origin. Third, inspiring from the idea in \cite{Wen_2011,Liang_2015}, the input saturation problem is considered in the process of designing the guidance law.

The remainder of this paper is organized as follows.  Section \ref{Preliminary} presents some preliminaries. The drag dynamics is formulated in Section \ref{sec of model}. Section \ref{guidance law design} elaborates the output feedback guidance law in the presence of input saturation for entry. Section \ref{simulation} shows the simulation results. Finally, Section \ref{conclusion} summarizes the conclusions.
\section{Preliminary}\label{Preliminary}
\begin{definition}
Any continuous function $N(s): R\rightarrow R$ is a function of Nussbaum type if it has the following properties:
\begin{equation}\label{Nussbaum property1}
\lim_{s\rightarrow\infty} \sup\frac{1}{s}\int_{0}^{s}N(\zeta)\mathrm{d}\zeta=+\infty
\end{equation}
\begin{equation}\label{Nussbaum property2}
\lim_{s\rightarrow\infty} \inf\frac{1}{s}\int_{0}^{s}N(\zeta)\mathrm{d}\zeta=-\infty
\end{equation}
\end{definition}

\begin{lemma}\label{Nussbaum lemma}
Let $V(\cdot)$ and $\mathcal{X}(\cdot)$ be smooth functions defined on $[0,t_f]$ with $V(t)\geq0$, $\forall t\in[0,t_f]$, and $N(\mathcal{X})=e^{\mathcal{X}^2}\cos\left(\frac{\pi}{2}\mathcal{X}\right)$ is an even smooth Nussbaum type function. The following inequality holds:
\begin{equation}\label{condition for lemma1}
0\leq V(t)\leq c_0 + c_1\int_0^t\left(\rho(\tau)N(\mathcal{X}(\tau))\dot{\mathcal{X}}(\tau)-\dot{\mathcal{X}}(\tau)\right)e^{\lambda (\tau-t)}\mathrm{d}\tau
\end{equation}
where $\lambda$, $c_0$ and $c_1$ are constants, $c_1>0,\lambda>0$, and $\rho(t)$ is a time-varying parameter that satisfies $0<\rho_{\min}\leq \rho(t)\leq \rho_{\max}$. Then, $\mathcal{X}(t)$, $V(t)$ and $\int_0^t\left(\rho(\tau)N(\mathcal{X})\dot{\mathcal{X}}-\dot{\mathcal{X}}\right)e^{\lambda (\tau-t)}\mathrm{d}\tau$ must be bounded on $[0,t_f]$.
\end{lemma}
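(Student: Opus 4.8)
The plan is to argue by contradiction on the boundedness of $\mathcal{X}(t)$, the other two assertions then following with little extra work. Set
$$I(t)=\int_0^t\bigl(\rho(\tau)N(\mathcal{X}(\tau))\dot{\mathcal{X}}(\tau)-\dot{\mathcal{X}}(\tau)\bigr)e^{\lambda(\tau-t)}\,\mathrm{d}\tau ,$$
so that hypothesis \eqref{condition for lemma1} reads $0\le V(t)\le c_0+c_1 I(t)$. The only information we extract from $V$ is therefore the one-sided bound $I(t)\ge -c_0/c_1$ for all $t$. Observe that $I$ solves the linear scalar ODE $\dot I=-\lambda I+(\rho N(\mathcal{X})-1)\dot{\mathcal{X}}$ with $I(0)=0$, equivalently $e^{\lambda t}I(t)=\int_0^t e^{\lambda\tau}\bigl(\rho(\tau)N(\mathcal{X}(\tau))-1\bigr)\dot{\mathcal{X}}(\tau)\,\mathrm{d}\tau$. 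It is convenient to regard $[0,t_f)$ as the maximal interval on which $V,\mathcal{X}$ are available; the substance of the lemma is that the resulting bounds do not depend on $t_f$, i.e. that $\mathcal{X}$ cannot escape as $t\to t_f$. So everything reduces to ruling out unbounded $\mathcal{X}$.

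Suppose $\mathcal{X}$ is not bounded, so that either $\limsup_{t\to t_f}\mathcal{X}(t)=+\infty$ or $\liminf_{t\to t_f}\mathcal{X}(t)=-\infty$; I treat the first case, the second being symmetric because $N$ is even. I exploit the oscillatory structure of $N(\mathcal{X})=e^{\mathcal{X}^2}\cos(\tfrac{\pi}{2}\mathcal{X})$: since $\mathcal{X}(0)$ is finite and $\mathcal{X}$ is continuous, $\mathcal{X}$ attains all sufficiently large values, so there are times $t_i\to t_f$ with $\mathcal{X}(t_i)$ a large multiple of $4$ (a ``positive hump'', $\cos=1$) and interlacing times $t_i'\to t_f$ with $\mathcal{X}(t_i')\equiv 2\pmod 4$ (a ``negative hump'', $\cos=-1$). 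Because the amplitude $e^{\mathcal{X}^2}$ grows super-exponentially in $\mathcal{X}$, the contribution to $\int_0^{t}e^{\lambda\tau}(\rho N(\mathcal{X})-1)\dot{\mathcal{X}}\,\mathrm{d}\tau$ accumulated while $\mathcal{X}$ traverses the current negative hump dominates, in absolute value, the entire history accumulated before that hump. Using $0<\rho_{\min}\le\rho(\tau)\le\rho_{\max}$ to bound $\rho N(\mathcal{X})$ from above on the $\cos\le 0$ portions and from below on the $\cos\ge 0$ portions, and that for this explicit $N$ the hump integrals over consecutive $\cos$-sign intervals alternate in sign with magnitudes differing by at most a fixed factor, one concludes that $e^{\lambda t_i'}I(t_i')$, hence $I(t_i')$, tends to $-\infty$ along the negative-hump times. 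This contradicts $I(t)\ge-c_0/c_1$. Hence $\mathcal{X}$ is bounded on $[0,t_f)$.

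With $\mathcal{X}$ bounded, $N(\mathcal{X})$ is bounded; $\dot{\mathcal{X}}$ is continuous, hence bounded on the closure of $[0,t_f)$, so the forcing $(\rho N(\mathcal{X})-1)\dot{\mathcal{X}}$ is bounded, and the linear ODE $\dot I=-\lambda I+(\rho N(\mathcal{X})-1)\dot{\mathcal{X}}$ with $\lambda>0$ keeps $I(t)$ bounded by a bound independent of $t_f$. Finally $0\le V(t)\le c_0+c_1 I(t)$ yields boundedness of $V$, and boundedness of the integral term is precisely boundedness of $I$. In particular $t_f$ may be taken to be $+\infty$.

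The main obstacle is the middle step. Unlike the constant-gain version, the weight $e^{\lambda(\tau-t)}$ blocks a clean change of variables $\tau\mapsto\mathcal{X}(\tau)$ in $\int(\rho N(\mathcal{X})-1)\dot{\mathcal{X}}\,\mathrm{d}\tau$, and (with $\rho$ only assumed to lie in $[\rho_{\min},\rho_{\max}]$, not even differentiable) its correlation with $\mathcal{X}(\tau)$ is unknown, so integration by parts is unavailable. The resolution rests on two quantitative facts about the Nussbaum function specified in the statement: that the magnitudes of its positive and negative hump integrals are comparable, so no admissible $\rho$ can use the positive humps to cancel the negative ones; and that those magnitudes grow so fast that the exponentially weighted average of the past (weights integrating to less than one and concentrating near the present) cannot offset the current negative hump. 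Making these two estimates precise, uniformly over admissible $\rho$, is the crux of the argument.
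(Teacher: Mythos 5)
Your proposal is correct and follows essentially the same route as the paper's proof: a contradiction argument in two symmetric cases (using that $N$ is even), decomposing the integral at the sign changes of $\cos(\frac{\pi}{2}\mathcal{X})$ near the odd integers, bounding the weight $e^{\lambda(\tau-t)}$ between a positive constant and $1$ on the finite interval, and showing that the most recent negative hump of $N$, of size roughly $e^{(4m+2)^2}$, overwhelms the entire accumulated history, which is at most of order $(4m+1)\rho_{\max}e^{(4m+1)^2}$. The quantitative ``domination'' estimate you defer as the crux is exactly what the paper carries out explicitly (its bounds $V_{\mathcal{X}}(t_0,t_1)\leq c_{x1}(\rho_{\max}e^{(4m+1)^2}+1)$ and $V_{\mathcal{X}}(t_1,t_2)\leq -c_{x2}e^{(4m+2-\varepsilon)^2}-c_{x3}$), and it goes through for the stated $N$; only your aside that consecutive hump magnitudes are ``comparable'' is imprecise --- they in fact differ super-exponentially, which is precisely why the last negative hump wins.
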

\begin{proof}
See the Appendix.
\end{proof}
\section{Model Derivation}\label{sec of model}
The motion equations of an unpowered, point mass vehicle flying over a non-rotating planet in a stationary atmosphere are given by \cite{Talole_2007,Xia_2013,Tian_2011,Guo}
\begin{subequations}\label{model}
\begin{equation}\label{dot_r}
\dot{r}=v\sin\gamma
\end{equation}
\begin{equation}
\dot{\phi}=\frac{v\cos\gamma\sin\chi}{r\cos\theta}
\end{equation}
\begin{equation}
\dot{\theta}=\frac{v\cos\gamma\cos\chi}{r}
\end{equation}
\begin{equation}\label{dot_v}
\dot{v}=-D-g\sin\gamma
\end{equation}
\begin{alignat}{1}
\dot{\gamma}=&\frac{L\cos\sigma}{v}-\left(\frac{g}{v}-\frac{v}{r}\right)\cos\gamma
\end{alignat}
\begin{equation}
\dot{\chi}=\frac{L\sin\sigma}{v\cos\gamma}+\frac{v\cos\gamma\sin\chi\tan\theta}{r}
\end{equation}
\end{subequations}
where $r$ is the radial position, $\phi$ is longitude, $\theta$ latitude, $v$ is the velocity, $\gamma$ is the flight path angle, $\chi$ is the heading angle, $L$ is the lift acceleration, $D$ is the drag acceleration, and $g$ is gravitational acceleration. The downrange $s$ can be calculated according to the states of (\ref{model}), and one also has the differential equation for $s$ as \cite{Lu}
\begin{equation}
\dot{s}=-\frac{vr_0\cos\gamma}{r}
\end{equation}
where $r_0$ is the reference radius. $L$ and $D$ can be calculated as
\begin{subequations}
\begin{equation}
L=\frac{1}{2m}\rho v^{2}S(\underbrace{C_{L}^{0}+\Delta C_{L}}_{C_{L}})
\end{equation}
\begin{equation}\label{eq of drag}
D=\frac{1}{2m}\rho v^{2}S(\underbrace{C_{D}^{0}+\Delta C_{D}}_{C_{D}})
\end{equation}
\end{subequations}
where $m$ is the vehicle mass, $\rho$ is the atmospheric density, $S$ is the reference area, $C_{L}^{0}$ and $C_{D}^{0}$ are nominal values of aerodynamic coefficients, and $\Delta C_{L}$ and $\Delta C_{D}$ are bounded uncertainties. An exponential atmospheric density model
\begin{equation}\label{exponential atmospheric density model}
\rho=\rho_{0}e^{-\frac{h}{h_{s}}}+\Delta \rho
\end{equation}
is assumed, where $h=r-r_{0}$, $r_{0}$ is the reference radius, $\rho_{0}$ is atmospheric density at the reference radius, $\Delta \rho$ is bounded uncertainty, and $h_{s}$ is characteristic constant. The gravitational acceleration as a function of $r$ is given by
\begin{equation}
g=\frac{\mu}{r^{2}}
\end{equation}
where $\mu$ is gravitational constant.

Due (\ref{eq of drag}), one has
\begin{equation}\label{dot_D}
\dot{D}=\frac{1}{2}\dot{\rho}v^{2}C_{D}\frac{S}{m}+\rho v\dot{v}C_{D}\frac{S}{m}+\frac{1}{2}\rho v^{2}\dot{C}_{D}\frac{S}{m}
\end{equation}
and
\begin{equation}
\frac{\dot{D}}{D}=\frac{\dot{\rho}}{\rho}+\frac{2\dot{v}}{v}+\frac{\dot{C}_{D}}{C_{D}}
\end{equation}
It can also be calculated out that
\begin{subequations}
\begin{equation}\label{dot_rho}
\frac{\dot{\rho}}{\rho}=-\frac{\dot{h}}{h_{s}}+\delta_{\rho}=-\frac{\dot{r}}{h_{s}}+\delta_{\rho}\xlongequal{\mathrm{Eq.}~(\ref{dot_r})}-\frac{v\sin\gamma}{h_{s}}+\delta_{\rho}
\end{equation}
\begin{equation}\label{dot_C_D}
\frac{\dot{C}_{D}}{C_D}=\frac{\dot{C}_{D}^{0}}{C_D^{0}}+\delta_{C_D}
\end{equation}
\begin{equation}
\dot{g}=-\frac{2\mu}{r^{3}}v\sin\gamma=-\frac{2gv\sin\gamma}{r}
\end{equation}
\end{subequations}
where $\delta_{\rho}=\frac{\rho\Delta\dot{\rho}-\dot{\rho}\Delta\rho}{\rho(\rho-\Delta\rho)}$ and $\delta_{C_D}=\frac{\Delta\dot{C}_{D}C_{D}^{0}-\Delta C_{D}\dot{C}_{D}^{0}}{C_{D}^{0}(C_{D}^{0}+\Delta C_{D})}$. Thus,
\begin{equation}\label{dotD}
\frac{\dot{D}}{D}=-\frac{v\sin\gamma}{h_s}-\frac{2D}{v}-\frac{2g\sin\gamma}{v}+\underbrace{\frac{\dot{C}_{D}^{0}}{C_{D}^{0}}}_{C}+\underbrace{\delta_{\rho}+\delta_{C_D}}_{\delta}
\end{equation}
Furthermore,
\begin{equation}\label{dotD}
\dot{D}=p(D,t)+\delta D
\end{equation}
\begin{equation}\label{ddotD}
\ddot{D}=f(D,t)+g_{0}(D,t)\mathrm{sat}(u)+\Delta(D,t)
\end{equation}
where
\begin{equation}
\mathrm{sat}(u)=\cos\sigma
\end{equation}
and
\begin{equation}\nonumber
p=\left(-\frac{v\sin\gamma}{h_s}-\frac{2D}{v}-\frac{2g\sin\gamma}{v}+C\right)D
\end{equation}
{\footnotesize\begin{alignat}{1}
f=&\left(-\frac{v\sin\gamma}{h_{s}}-\frac{4D}{v}-\frac{2g\sin\gamma}{v}+C\right)\left(-\frac{v\sin\gamma}{h_s}D-\frac{2D^2}{v}-\frac{2g\sin\gamma}{v}D+CD\right)\nonumber\\
&+D\left(\frac{D\sin\gamma+g}{h_{s}}+\frac{4g\sin^{2}\gamma-2g\cos^{2}\gamma}{r}-\frac{2D^{2}+4Dg\sin\gamma+2g^{2}\sin^{2}\gamma-2g^{2}\cos^{2}\gamma}{v^2}+\frac{v^{2}\cos^{2}\gamma}{rh_{s}}+\dot{C}\right)\nonumber
\end{alignat}}
\begin{equation}\nonumber
g_{0}=-\left(\frac{v}{h_{s}}+\frac{2g}{v}\right)\frac{LD\cos\gamma}{v}
\end{equation}
\begin{equation}\nonumber
\Delta=D(\dot{\delta}+\delta^2)+\underbrace{\left(-\frac{2v\sin\gamma}{h_{s}}-\frac{6D^{2}}{v}-\frac{4g\sin\gamma}{v}D+2CD\right)}_{Q(D,t)}\delta
\end{equation}
Since the purpose of designing a guidance law is to make the drag acceleration $D$ track its reference value $D^{*}$ by modulating the bank angle $\sigma$, we define $\widetilde{D}=D-D^{*}$ and $x=[x_{0},x_{1},x_{2}]^{T}=\left[\int\widetilde{D}\mathrm{d}t,\widetilde{D},\dot{\widetilde{D}}\right]^{T}$. The drag dynamics for guidance law design is formulated as
\begin{equation}\label{sample ISS system}
\dot{x} =Ax+B(f(D,t)-\ddot{D}^{*}+g_{0}(D,t)\mathrm{sat}(u)+\Delta(D,t))
\end{equation}
where
\begin{equation}\nonumber
A=\left[\begin{array}{c|c}
0 & I_{2}\\ \hline
0 &0\end{array}\right],~B=\begin{bmatrix}0&0&1
\end{bmatrix}^{T}
\end{equation}
To facilitate control system design, the saturation nonlinearity is approximated by a smooth function defined as
\begin{equation}\label{input with tanh}
g(u)=\tanh(u)=\frac{e^u-e^{-u}}{e^u+e^{-u}}
\end{equation}
Then one has
\begin{equation}
\mathrm{sat}(u)=g(u)+\bar{d}(t)
\end{equation}
where it can be easily to verify that $\bar{d}(t)$ is a bounded function. In addition, the authors introduce the following auxiliary system
\begin{equation}\label{auxiliary system}
\dot{u}=-\frac{1}{\tau}u+\frac{1}{\tau}u_{c}
\end{equation}
where $\tau$ is a positive filter time constant, $u_c$ is an auxiliary control signal. Thus, the overall system consisting of auxiliary system (\ref{auxiliary system}) and system (\ref{sample ISS system}) can be rewritten as
\begin{subequations}\label{sample ISS system with sat}
\begin{equation}\label{sample ISS system for state1}
\dot{x} =Ax+B(f(D,t)-\ddot{D}^{*}+g_{0}(D,t)g(u)+\Delta_s(D,t))
\end{equation}
\begin{equation}\label{sample ISS system for state2}
\dot{u}=-\frac{1}{\tau}u+\frac{1}{\tau}u_{c}
\end{equation}
\end{subequations}
where $\Delta_s(D,t)=D(\dot{\delta}+\delta^2)+Q(D,t)\delta+g_0(D,t)\bar{d}(t)$. Here, we assume that uncertainties $\delta$ and $\dot{\delta}$ are bounded. It can be seen that the expression of $\Delta_s(D,t)$ contains $v$ and $D$, and Eq. (\ref{dot_v}) implies that $v$ is a function of $\int D \mathrm{d}t$. Therefore, in a reasonable flight domain of interest there exist positive constants $l_0$, $l_1$, $l$ and $d$ such that
\begin{equation}\label{assumption}
|\Delta_s|\leq l_0|x_{0}|+l_1|x_{1}|+d\leq l\|x\|+d
\end{equation}
holds. Beside, $-90^{\circ}<\gamma<90^{\circ}$ is also assumed in the domain. From this, clearly, $g_0$ is invertible.
\begin{remark}
A drag-tracking guidance law has been got in \cite{Yan} under the assumption that the uncertainties related term $\Delta$ is not bigger than a linear function of $|\widetilde{D}|$. However, since $\Delta$ is also a function of $v$ as we mentioned, strictly speaking, that assumption of $\Delta$ cannot be guaranteed. Comparing with \cite{Yan}, the assumption as expressed by Eq. (\ref{assumption}) is more reasonable.
\end{remark}
\begin{remark}
In our case, the control signal to be designed is $u$. (\ref{sample ISS system for state2}) is artificially introduced to generate a stable control signal $u$ by designing an auxiliary control signal $u_c$.
\end{remark}
\begin{remark}
We can see that Eq. (\ref{sample ISS system with sat}) is not in the standard form of chains of integrators, as $\dot{x}$ is related to nonlinear function $g(u)$ instead of $u$ directly. This results in a term $\frac{\partial g(u)}{\partial u}\dot{u}$ instead of only $\dot{u}$ as in the controller design approach given later. $\frac{\partial g(u)}{\partial u}$ will tend to zero as $|u|$ is big enough as depicted in Fig. \ref{tanh and sat}, and if $\left(\frac{\partial g(u)}{\partial u}\right)^{-1}$ is used in the final designed control law, the singularity will appear. In order to avoid this situation, a Nussbaum function is employed.
\end{remark}

\begin{figure}
\centering
\epsfig{file=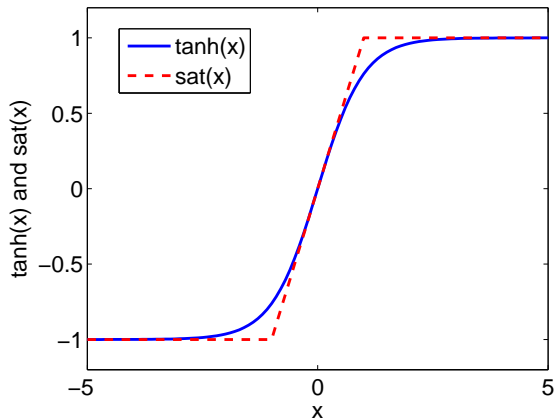,height=6cm} \caption{Comparison between $\tanh(\cdot)$ and $\mathrm{sat(\cdot)}$\label{tanh and sat}}
\end{figure}

\section{Guidance Law Design}\label{guidance law design}
Since $\dot{D}$ cannot be  measured, we will construct an output feedback controller  based on $x_0,x_1$.  Choose   sets of positive constants $h_1,h_2$ and $\alpha_{1},\alpha_{2},\alpha_{3}$ such that the matrices
\begin{align*}
F_{0}= \left[\begin{array}{cc}
-h_1 &1\\
-h_{2}&0\end{array}\right],~~~~~
A_{0}=\left[\begin{array}{ccc}
0 & 1 & 0\\
0 & 0 & 1\\
-\alpha_{3} & -\alpha_{2} & -\alpha_{1}\end{array}\right]
\end{align*}
are Hurwitz. Then we can introduce a high-gain observer
\begin{subequations}\label{observer}
\begin{equation}
\dot{\hat{x}}_{1}=\hat{x}_{2}+\frac{h_{1}}{\varepsilon}(x_{1}-\hat{x}_{1})
\end{equation}
\begin{equation}\label{hat_x2}
\dot{\hat{x}}_{2}=f(D,t)-\ddot{D}^{*}+g_{0}(D,t)g(u)+\frac{h_{2}}{\varepsilon^{2}}(x_{1}-\hat{x}_{1})
\end{equation}
\end{subequations}
and   an output  feedback virtual controller
\begin{alignat}{1}
g^{*}(u)=g_{0}^{-1}\left(-f+\ddot{D}^{*}-\frac{\alpha_3}{\varepsilon^{3}_{0}}x_{0}-\frac{\alpha_2}{\varepsilon^{2}_{0}}x_{1}-\frac{\alpha_1}{\varepsilon_{0}}\hat{x}_{2}\right)\label{output feedback virtual controller}
\end{alignat}
where $\varepsilon>0,\varepsilon_{0}>0$ are gain constants. For system (\ref{sample ISS system with sat}), the control law
\begin{subequations}\label{control law uc}
\begin{equation}
u_c=\tau N(\mathcal{X})\bar{u},~N(\mathcal{X})=e^{\mathcal{X}^2}\cos\left(\frac{\pi}{2}\mathcal{X}\right),~\dot{\mathcal{X}}=\gamma_{x}\bar{u}(g-g^*)
\end{equation}
\begin{equation}
\bar{u}=\frac{1}{\tau}\frac{\partial g}{\partial u}u-G(D,t)-k(g-g^*)
\end{equation}
\end{subequations}
where $k>0$, $\gamma_x>0$, and $G(D,t)=\frac{1}{2}g_0^2\widetilde{g}-\frac{\partial g^*}{\partial t}-\frac{\partial g^*}{\partial x_0}x_1-\frac{\partial g^*}{\partial x_1}p(D,t)+\frac{1}{2}\left\|\widetilde{g}\frac{\partial g^*}{\partial x_1}D\right\|^2-\frac{\partial g^*}{\partial \hat{x}_2}\dot{\hat{x}}_2$, can be designed, and the main results can be stated as the following theorem.
\begin{theorem}
Consider the closed-loop system composed of (\ref{sample ISS system with sat}), (\ref{observer}), (\ref{output feedback virtual controller}), and (\ref{control law uc}).  There exist positive constants $\varepsilon_0^{*}$ and $\varepsilon^{*}$ such that, for every $0<\varepsilon_0<\varepsilon_0^{*}$ and $0<\varepsilon<\varepsilon^{*}$, the state $x$ can converge to a small residual set around the origin.
\end{theorem}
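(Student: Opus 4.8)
\medskip
\noindent\textbf{Proof idea.}\quad The plan is to recast the closed loop in scaled coordinates as a singularly perturbed interconnection of three subsystems, carry out one composite Lyapunov computation, and read off an inequality whose only non-dissipative residual is exactly of the form handled by Lemma \ref{Nussbaum lemma}. \emph{Coordinates.} For the observer set $\eta_{1}=(x_{1}-\hat{x}_{1})/\varepsilon$, $\eta_{2}=x_{2}-\hat{x}_{2}$; subtracting (\ref{observer}) from the first two rows of (\ref{sample ISS system for state1}) gives $\varepsilon\dot{\eta}=F_{0}\eta+\varepsilon\,[0,\,1]^{T}\Delta_{s}$, a fast (time scale $\varepsilon$) subsystem exponentially stable because $F_{0}$ is Hurwitz. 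For the plant set $\xi_{0}=x_{0}/\varepsilon_{0}^{2}$, $\xi_{1}=x_{1}/\varepsilon_{0}$, $\xi_{2}=x_{2}$, and $z=g(u)-g^{*}$ with $g^{*}$ from (\ref{output feedback virtual controller}); substituting $g(u)=g^{*}+z$ and $\hat{x}_{2}=x_{2}-\eta_{2}$ in (\ref{sample ISS system for state1}) yields $\varepsilon_{0}\dot{\xi}=A_{0}\xi+\alpha_{1}B\eta_{2}+\varepsilon_{0}B(g_{0}z+\Delta_{s})$, a fast (time scale $\varepsilon_{0}$) subsystem exponentially stable through $A_{0}$ when $z=\eta_{2}=\Delta_{s}=0$. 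Finally, differentiating $z$ along (\ref{sample ISS system for state2})--(\ref{control law uc}) and using $\dot{u}=-\tfrac{1}{\tau}u+N(\mathcal{X})\bar{u}$ gives $\dot{z}=\tfrac{\partial g}{\partial u}N(\mathcal{X})\bar{u}-\tfrac{1}{\tau}\tfrac{\partial g}{\partial u}u-\dot{g}^{*}$; by construction $G(D,t)$ and $\bar{u}$ cancel $-\dot{g}^{*}+\tfrac{1}{\tau}\tfrac{\partial g}{\partial u}u$ up to the uncertainty pieces ($\propto\delta$, $\propto\bar{d}$), install the damping $-k z$, and, through the explicit quadratic terms $\tfrac{1}{2} g_{0}^{2}\widetilde{g}$ and $\tfrac{1}{2}\bigl\|\widetilde{g}\,\tfrac{\partial g^{*}}{\partial x_{1}}D\bigr\|^{2}$ built into $G$, are arranged so that Young's inequality will later absorb the $\xi$--$z$ coupling $g_{0}z$ and the $\delta$-weighted cross terms.

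Next I would take the composite Lyapunov function $V=\xi^{T}P_{0}\xi+\theta\,\eta^{T}P_{F}\eta+\tfrac{1}{2} z^{2}$ with $A_{0}^{T}P_{0}+P_{0}A_{0}=-I$, $F_{0}^{T}P_{F}+P_{F}F_{0}=-I$ and $\theta>0$ to be fixed. Differentiating along the closed loop, bounding $\Delta_{s}$ by (\ref{assumption}), using $\|x\|\le C\|\xi\|$ for $\varepsilon_{0}\le1$, and applying Young's inequality to every cross term (in particular to the $\tfrac{\alpha_{1}}{\varepsilon_{0}}$-amplified $\xi$--$\eta_{2}$ term that arises from the use of $\hat{x}_{2}$ in $g^{*}$), one arrives at
\[
\dot{V}\le-\Bigl(\tfrac{a_{0}}{\varepsilon_{0}}-b_{0}\Bigr)\|\xi\|^{2}-\Bigl(\tfrac{a_{1}\theta}{\varepsilon}-\tfrac{b_{1}}{\varepsilon_{0}}-b_{2}\Bigr)\|\eta\|^{2}-k\,z^{2}+\tfrac{1}{\gamma_{x}}\bigl(\rho(t)N(\mathcal{X})\dot{\mathcal{X}}-\dot{\mathcal{X}}\bigr)+c,
\]
where $\rho(t)=\tfrac{\partial g}{\partial u}(u(t))\in(0,1]$, the constants $b_{i}$ collect the $l\|x\|$ part of (\ref{assumption}) and the interconnection gains, and $c$ collects the bounded residuals ($d$, $\bar{d}(t)$, $\ddot{D}^{*}$, and the like). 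Fixing $\theta$, then $\varepsilon_{0}^{*}$ small enough that $a_{0}/\varepsilon_{0}-b_{0}>0$, and then $\varepsilon^{*}$ small enough (for the chosen $\varepsilon_{0}$) that $a_{1}\theta/\varepsilon-b_{1}/\varepsilon_{0}-b_{2}>0$, this reduces to $\dot{V}\le-\lambda V+\tfrac{1}{\gamma_{x}}(\rho N\dot{\mathcal{X}}-\dot{\mathcal{X}})+c$ for some $\lambda>0$; integrating on the maximal existence interval $[0,t_{f})$ gives
\[
0\le V(t)\le\Bigl(V(0)+\tfrac{c}{\lambda}\Bigr)+\tfrac{1}{\gamma_{x}}\int_{0}^{t}\bigl(\rho(\tau)N(\mathcal{X})\dot{\mathcal{X}}-\dot{\mathcal{X}}\bigr)e^{\lambda (\tau-t)}\,\mathrm{d}\tau,
\]
which is precisely the hypothesis of Lemma \ref{Nussbaum lemma}.

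Then I would invoke Lemma \ref{Nussbaum lemma}: it forces $\mathcal{X}$, $V$ and the integral term to be bounded on every $[0,t_{f}]$ with bounds independent of $t_{f}$, so there is no finite escape, the estimate holds on $[0,\infty)$, and in particular $u$ stays bounded, whence $\rho(t)$ lies in a compact subset of $(0,1]$ --- which is exactly what the lemma requires. Boundedness of $V$ together with the $-\lambda V$ term then yields uniform ultimate boundedness, and unwinding the scalings $x_{0}=\varepsilon_{0}^{2}\xi_{0}$, $x_{1}=\varepsilon_{0}\xi_{1}$, $x_{2}=\xi_{2}$ shows that $\|x\|$ enters a residual ball of radius $O(\varepsilon_{0})+O(\varepsilon)+O(c)$, which is made small by decreasing $\varepsilon_{0},\varepsilon$ and tightening the design (cf.\ the singular-perturbation estimates in \cite{Khalil_1992,Khalil_1999}). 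The step I expect to be the main obstacle is keeping the argument non-circular and uniform: the Nussbaum step needs $\rho(t)$ bounded away from zero, which is known only a posteriori from boundedness of $u$, so the $V$-inequality must first be set up on the open interval $[0,t_{f})$ and only then closed via Lemma \ref{Nussbaum lemma}; simultaneously, the $1/\varepsilon_{0}$- and $1/\varepsilon$-amplified cross terms induced by the high-gain observer, together with the $l\|x\|$ growth permitted by (\ref{assumption}), must all be dominated by a single choice of the design gains $h_{i},\alpha_{i},\theta$ and of the thresholds $\varepsilon_{0}^{*},\varepsilon^{*}$.
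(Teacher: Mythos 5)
Your proposal follows essentially the same route as the paper: the identical scaling $\zeta_0=x_0/\varepsilon_0^2$, $\zeta_1=x_1/\varepsilon_0$, $\zeta_2=x_2$, $\eta_1=(x_1-\hat x_1)/\varepsilon$, $\eta_2=x_2-\hat x_2$ into a singularly perturbed form, the same composite Lyapunov function $V_c=\zeta^TP_0\zeta+\eta^TP\eta+\tfrac12(g-g^*)^2$ (the paper takes your weight $\theta=1$), Young's inequality on the cross terms, reduction to $\dot V_c\le-\lambda V_c+\Xi+\tfrac{1}{\gamma_x}(\tfrac{\partial g}{\partial u}N(\mathcal X)\dot{\mathcal X}-\dot{\mathcal X})$, integration into the hypothesis of Lemma \ref{Nussbaum lemma}, and the final smallness argument on $\varepsilon_0^*,\varepsilon^*$. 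The one point where you go beyond the paper is your explicit flagging of the need to establish $\rho(t)=\partial g/\partial u$ bounded away from zero a posteriori (the paper only records $0<\partial g/\partial u\le 1$), which is a legitimate subtlety but does not change the argument's structure.
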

\begin{proof}
The change of variables
\begin{equation}\label{transform in theorem 2}
\zeta_0=\frac{x_0}{\varepsilon_0^2},~\zeta_1=\frac{x_1}{\varepsilon_0},~\zeta_2=x_2,~\eta_{1}=\frac{x_{1}-\hat{x}_1}{\varepsilon},~\eta_{2}=x_{2}-\hat{x}_2
\end{equation}
bring Eqs. (\ref{sample ISS system for state1}) and (\ref{observer}) into the form
\begin{subequations}\label{compact singularly perturbed form}
\begin{equation}
\dot{\zeta}=\frac{1}{\varepsilon_0}A_0\zeta+B\left(\Delta_s+g_0\widetilde{g}+\frac{\alpha_1}{\varepsilon_{0}}\eta_{2}\right)
\end{equation}
\begin{equation}
\dot{\eta}=\frac{1}{\varepsilon}{F_{0}}\eta+{D_0}\Delta_s
\end{equation}
\end{subequations}
where $\widetilde{g}=g-g^*$, $\zeta=[\zeta_1,\zeta_2,\zeta_3]^{T},\eta=[\eta_{1},\eta_{2}]^{T}$, and $D_0= [0,1]^{T}$. Since $F_0$ and $A_0$ are Hurwitz matrices, there exist positive definite matrices  $P_0$ and $P$ satisfying  $P_0A_0+A_0^TP_0=-I$ and $PF_0+F_0^TP=-I$.  Then,  the derivative of Lyapunov function
\begin{equation}\label{Lyapunov}
V(\zeta,\eta)=\zeta^TP_0\zeta+\eta^TP\eta
\end{equation}
along the trajectories of system (\ref{compact singularly perturbed form})  is given by
\begin{alignat}{1}
\dot{V}=&-\frac{1}{\varepsilon_{0}}\|\zeta\|^2+2\zeta^TP_0B\left(\Delta_s+g_0\widetilde{g}+\frac{\alpha_1}{\varepsilon_{0}}\eta_{2}\right)-\frac{1}{\varepsilon}\|\eta\|^2+2\eta^TPD_0\Delta_s\nonumber\\
\leq& -\frac{1}{\varepsilon_{0}}\|\zeta\|^2+2\|\zeta\|\|P_0B\|\left(l\|D(\varepsilon_0)\|\|\zeta\|+d+|g_0\widetilde{g}|+\frac{\alpha_1}{\varepsilon_{0}}\|\eta\|\right)\nonumber\\
&-\frac{1}{\varepsilon}\|\eta\|^2+2\|\eta\|\|PD_0\|\left(l\|D(\varepsilon_0)\|\|\zeta\|+d\right)
\end{alignat}
where $D(\varepsilon_{0})=\mathrm{diag}[\varepsilon_{0}^{2},\varepsilon_{0},1] $. Considering Eq. (\ref{sample ISS system for state2}), we chose a Lyapunov function as
\begin{equation}
V_c(\zeta,\eta,g)=V(\zeta,\eta)+\frac{1}{2}(g-g^*)^2
\end{equation}
and its derivative is given by
\begin{alignat}{1}
\dot{V}_c(\zeta,\eta,g)
\leq& -\frac{1}{\varepsilon_{0}}\|\zeta\|^2+2\|\zeta\|\|P_0B\|\left(l\|D(\varepsilon_0)\|\|\zeta\|+d+|g_0\widetilde{g}|+\frac{\alpha_1}{\varepsilon_{0}}\|\eta\|\right)\nonumber\\
&-\frac{1}{\varepsilon}\|\eta\|^2+2\|\eta\|\|PD_0\|\left(l\|D(\varepsilon_0)\|\|\zeta\|+d\right)+\widetilde{g}\left(\frac{\partial g}{\partial u}\left(-\frac{1}{\tau}u+\frac{1}{\tau}u_{c}\right)-\dot{g}^*\right)\nonumber\\
=& -\frac{1}{\varepsilon_{0}}\|\zeta\|^2+2\|\zeta\|\|P_0B\|\left(l\|D(\varepsilon_0)\|\|\zeta\|+d+|g_0\widetilde{g}|+\frac{\alpha_1}{\varepsilon_{0}}\|\eta\|\right)\nonumber\\
&-\frac{1}{\varepsilon}\|\eta\|^2+2\|\eta\|\|PD_0\|\left(l\|D(\varepsilon_0)\|\|\zeta\|+d\right)\nonumber\\
&+\widetilde{g}\left(\frac{\partial g}{\partial u}\left(-\frac{1}{\tau}u+\frac{1}{\tau}u_{c}\right)-\frac{\partial g^*}{\partial t}-\frac{\partial g^*}{\partial x_0}x_1-\frac{\partial g^*}{\partial x_1}(p(D,t)+\delta D)-\frac{\partial g^*}{\partial \hat{x}_2}\dot{\hat{x}}_2\right)
\end{alignat}
Substituting the inequalities
$$
\|\zeta\|\|P_{0}B\|d\leq \frac{1}{2}\|\zeta\|^{2}+\frac{1}{2}\|P_{0}B\|^{2}d^{2}
$$$$
\|\eta\|\|PD_0\|d\leq \frac{1}{2}\|\eta\|^{2}+\frac{1}{2}\|PD_0\|^{2}d^{2}
$$
$$\|\zeta\|\|P_{0}B\||g_0\widetilde{g}|\leq \frac{1}{2}\|P_{0}B\|^{2}\|\zeta\|^{2}+\frac{1}{2}g_0^2\widetilde{g}^{2}$$
$$-\widetilde{g}\frac{\partial g^*}{\partial x_1}D\delta\leq \frac{1}{2}\left\|\widetilde{g}\frac{\partial g^*}{\partial x_1}D\right\|^2+\frac{1}{2}\delta^2$$
yields
\begin{alignat}{1}
\dot{V}_c
\leq&-\underbrace{\left(\frac{1}{\varepsilon_{0}}-1-2\|P_0B\|l\|D(\varepsilon_0)\|-\|P_{0}B\|^{2}\right)}_{\beta(\varepsilon_{0})}\|\zeta\|^2-\left(\frac{1}{\varepsilon}-1\right)\|\eta\|^2\nonumber\\
&+2\underbrace{\left(\|P_0B\|\frac{\alpha_1}{\varepsilon_{0}}+\|PD_0\|l\|D(\varepsilon_0)\|\right)}_{\alpha}\|\zeta\|\|\eta\|\nonumber\\
&+\widetilde{g}\left(\frac{\partial g}{\partial u}\left(-\frac{1}{\tau}u+\frac{1}{\tau}u_{c}\right)+\underbrace{\frac{1}{2}g_0^2\widetilde{g}-\frac{\partial g^*}{\partial t}-\frac{\partial g^*}{\partial x_0}x_1-\frac{\partial g^*}{\partial x_1}p(D,t)+\frac{1}{2}\left\|\widetilde{g}\frac{\partial g^*}{\partial x_1}D\right\|^2-\frac{\partial g^*}{\partial \hat{x}_2}\dot{\hat{x}}_2}_{G(D,t)}\right)\nonumber\\
&+\underbrace{(\|P_0B\|^2+\|PD_0\|^2)d_0^2+\frac{1}{2}\delta^2}_{\Xi}
\end{alignat}
Substituting (\ref{control law uc}) into the above inequality, one has
\begin{alignat}{1}
\dot{V}_c
\leq&-X^TQX+\Xi+\frac{\dot{\mathcal{X}}}{\gamma_x}\left(\frac{\partial g}{\partial u}N(\mathcal{X})-1\right)
\end{alignat}
where
\begin{equation}\nonumber
X=\begin{bmatrix}
\|\zeta\|\\
\|\eta\|\\
\widetilde{g}
\end{bmatrix},~
Q=\begin{bmatrix}
\beta(\varepsilon_{0}) & -\alpha & 0\\
-\alpha & \frac{1}{\varepsilon}-1 & 0\\
0 & 0 & k
\end{bmatrix}
\end{equation}
It can be seen that, for sufficiently small $\varepsilon$ and $\varepsilon_0$, one has $\beta(\varepsilon_{0})>0$ and the matrix $Q$ is positive define. Therefore, there exist positive constants $\varepsilon_0^{*}$ and $\varepsilon^{*}$ such that, for every $0<\varepsilon_0<\varepsilon_0^{*}$ and $0<\varepsilon<\varepsilon^{*}$, $\lambda_{\min}(Q)>0$ and the inequality
\begin{equation}\label{composite Lyapunov function}
\dot{V}_c\leq-\lambda_{\min}(Q)\|X\|^{2}+\Xi+\frac{\dot{\mathcal{X}}}{\gamma_x}\left(\frac{\partial g}{\partial u}N(\mathcal{X})-1\right)
\end{equation}
holds. Let $\mathcal{Y}=[\zeta,\eta,\widetilde{g}]^{T}$. Since $\|X\|=\|\mathcal{Y}\|$, we have
\begin{alignat}{1}
&\lambda_{\min}(P^{'})\|X\|^{2}=\lambda_{\min}(P^{'})\|\mathcal{Y}\|^{2}\leq V_c=\mathcal{Y}^{T}P^{'}\mathcal{Y}\leq \lambda_{\max}(P^{'})\|\mathcal{Y}\|^{2}=\lambda_{\max}(P^{'})\|X\|^{2}\label{inequality of V_(c)}
\end{alignat}
where $P^{'}=\mathrm{block~diag}\{P_{0},P,\frac{1}{2}\}$.
Substituting Eq. (\ref{inequality of V_(c)}) into Eq. (\ref{composite Lyapunov function}) yields
\begin{equation}
\dot{V}_c\leq -\frac{\lambda_{\min}(Q)}{\lambda_{\max}(P^{'})}V_c+\Xi+\frac{\dot{\mathcal{X}}}{\gamma_x}\left(\frac{\partial g}{\partial u}N(\mathcal{X})-1\right)
\end{equation}
that is
\begin{alignat}{1}
V_c(\mathcal{Y}(t))&\leq e^{-\lambda_{1} t}V_c(\mathcal{Y}(0))+\frac{1}{\lambda_{1}}(1-e^{-\lambda_{1} t})\Xi+\frac{1}{\gamma_x\lambda_{1}}\int_0^t\left(\frac{\partial g}{\partial u}N(\mathcal{X})\dot{\mathcal{X}}-\dot{\mathcal{X}}\right)e^{\lambda_1 (\tau-t)}\mathrm{d}\tau\nonumber\\
&\leq C + \frac{1}{\gamma_x\lambda_{1}}\int_0^t\left(\frac{\partial g}{\partial u}N(\mathcal{X})\dot{\mathcal{X}}-\dot{\mathcal{X}}\right)e^{\lambda_1 (\tau-t)}\mathrm{d}\tau\label{solution of dot(V_(c))}
\end{alignat}
where $\lambda_{1}=\frac{\lambda_{\min}(Q)}{\lambda_{\max}(P^{'})}$, $0<\frac{\partial g}{\partial u}=\frac{4}{(e^u+e^{-u})^2}\leq 1$, and $0<e^{-\lambda_{1} t}V_c(\mathcal{Y}(0))+\frac{1}{\lambda_{1}}(1-e^{-\lambda_{1} t})\Xi\leq C$ for bounded $\Xi$. It can easily be verified from Lemma \ref{Nussbaum lemma} and Eq. (\ref{solution of dot(V_(c))}) that, $\mathcal{X}(t)$ is bounded, and also, for a positive constant $\mathcal{X}_d$, we have $\left|\int_{0}^{t}\left(\frac{\partial g}{\partial u}N(\mathcal{X})\dot{\mathcal{X}}-\dot{\mathcal{X}}\right)e^{\lambda_1 (\tau-t)}\mathrm{d}\tau\right|
\leq \mathcal{X}_d$ . Substituting Eq. (\ref{inequality of V_(c)}) into Eq. (\ref{solution of dot(V_(c))}), we have
\begin{equation}\label{solution of dot(V_(c)) with inequality of V_(c)}
\|\mathcal{Y}(t)\|^{2}\leq \lambda_{2}e^{-\lambda_{1} t}\|\mathcal{Y}(0)\|^{2}+\frac{1}{\lambda_{1}\lambda_{3}}(1-e^{-\lambda_{1} t})\Xi+\frac{1}{\gamma_x\lambda_{1}\lambda_{3}}\mathcal{X}_d
\end{equation}
where $\lambda_{2}=\frac{\lambda_{\max}(P^{'})}{\lambda_{\min}(P^{'})},~\lambda_{3}=\lambda_{\min}(P^{'})$.
Therefore, $\mathcal{Y}(t)$ satisfies
\begin{equation}\label{state of closed-loop system in the similar form of ISS}
\|\mathcal{Y}(t)\|\leq \sqrt{\lambda_{2}e^{-\lambda_{1} t}}\|\mathcal{Y}(0)\|+\sqrt{\frac{1}{\lambda_{1}\lambda_{3}}(1-e^{-\lambda_{1} t})\Xi+\frac{1}{\gamma_x\lambda_{1}\lambda_{3}}\mathcal{X}_d}
\end{equation}
This indicates  the existence of the closed-loop  system solution. Due to the form of matrix $Q$ and $\lambda_{1}=\frac{\lambda_{\min}(Q)}{\lambda_{\max}(P^{'})}$, to guarantee that the state $x$
exponentially converges to a sufficient small residual set, we may choose the proper constants $\varepsilon_0^*$ and $\varepsilon^*$  such that $\sqrt{\frac{1}{\lambda_{1}\lambda_{3}}}$ is sufficiently small for every $0<\varepsilon_0<\varepsilon_0^{*}$ and $0<\varepsilon<\varepsilon^{*}$.
\end{proof}

\begin{figure}[H]
  \centering
  \subfigure[Drag acceleration]{
    \includegraphics[width=2.5in]{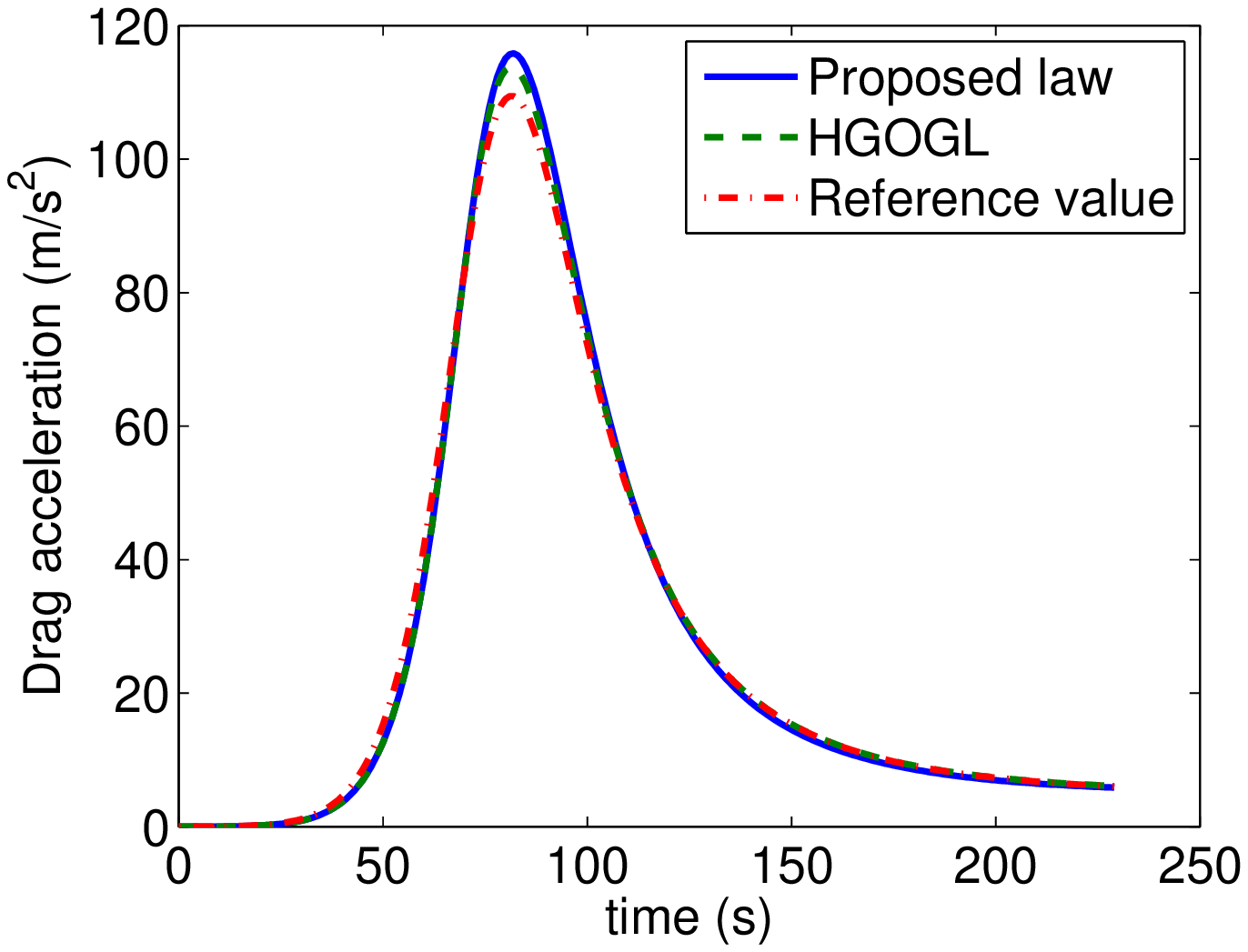}}
  \subfigure[Velocity]{
    \includegraphics[width=2.5in]{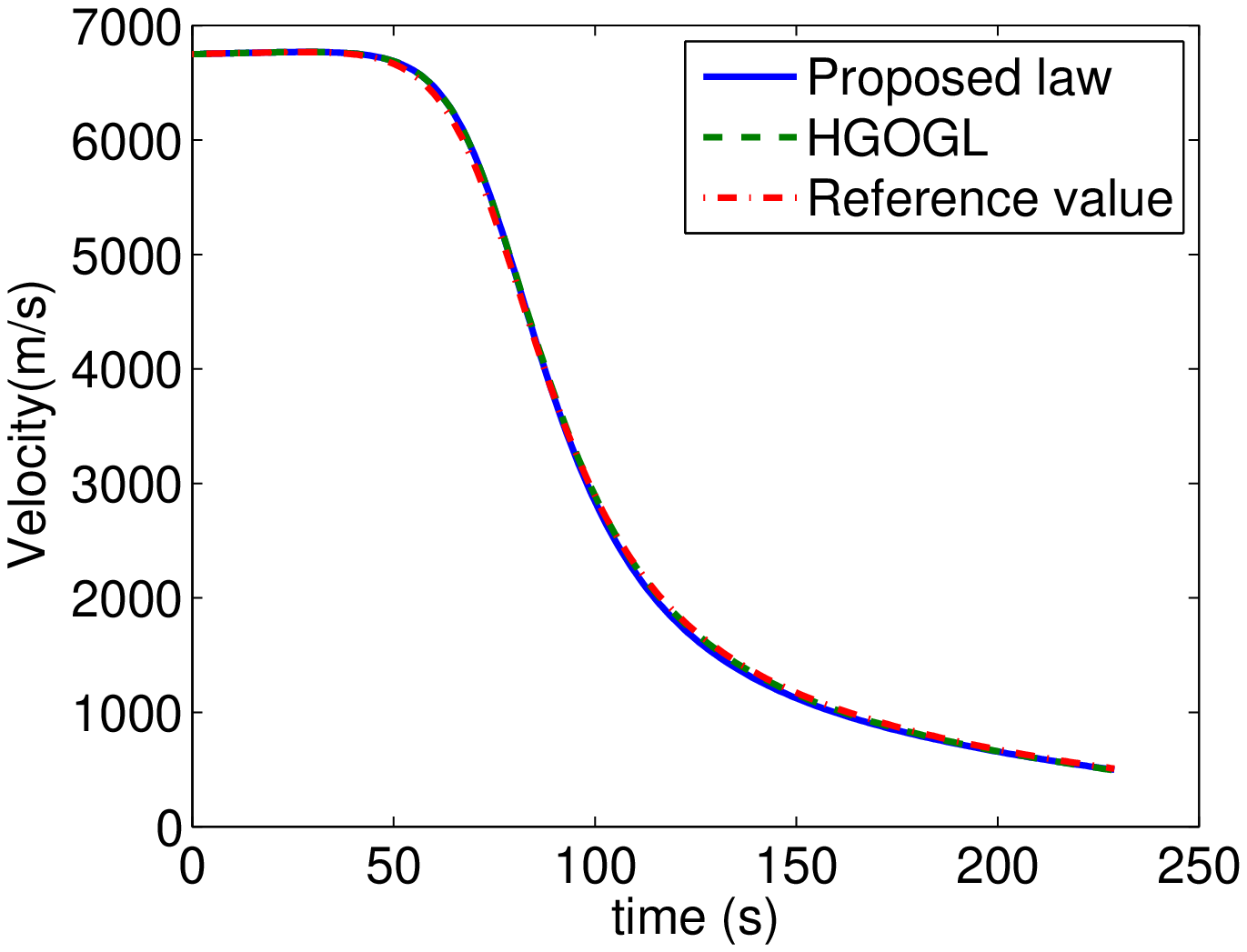}}
  \caption{Drag and velocity\label{drag}}
\end{figure}

\section{Simulation Results}\label{simulation}
This section presents simulation results to test the performance of the proposed guidance laws.

Consider the Mars atmospheric entry flight, and vehicle, reference drag profile and other data from \cite{Guo_2013} are used. The lift-to-drag ratio and the ballistic coefficient are 0.18 and $115\mathrm{kg/m^{2}}$, respectively. The initial and final state variables can be found in Table \ref{State}. It can be calculated out that the desired total downrange is 723.32km. The guidance command is saturated as $0^{\circ}\leq\sigma\leq180^{\circ}$.
\begin{figure}[H]
  \centering
  \subfigure[Altitude]{
    \includegraphics[width=2.5in]{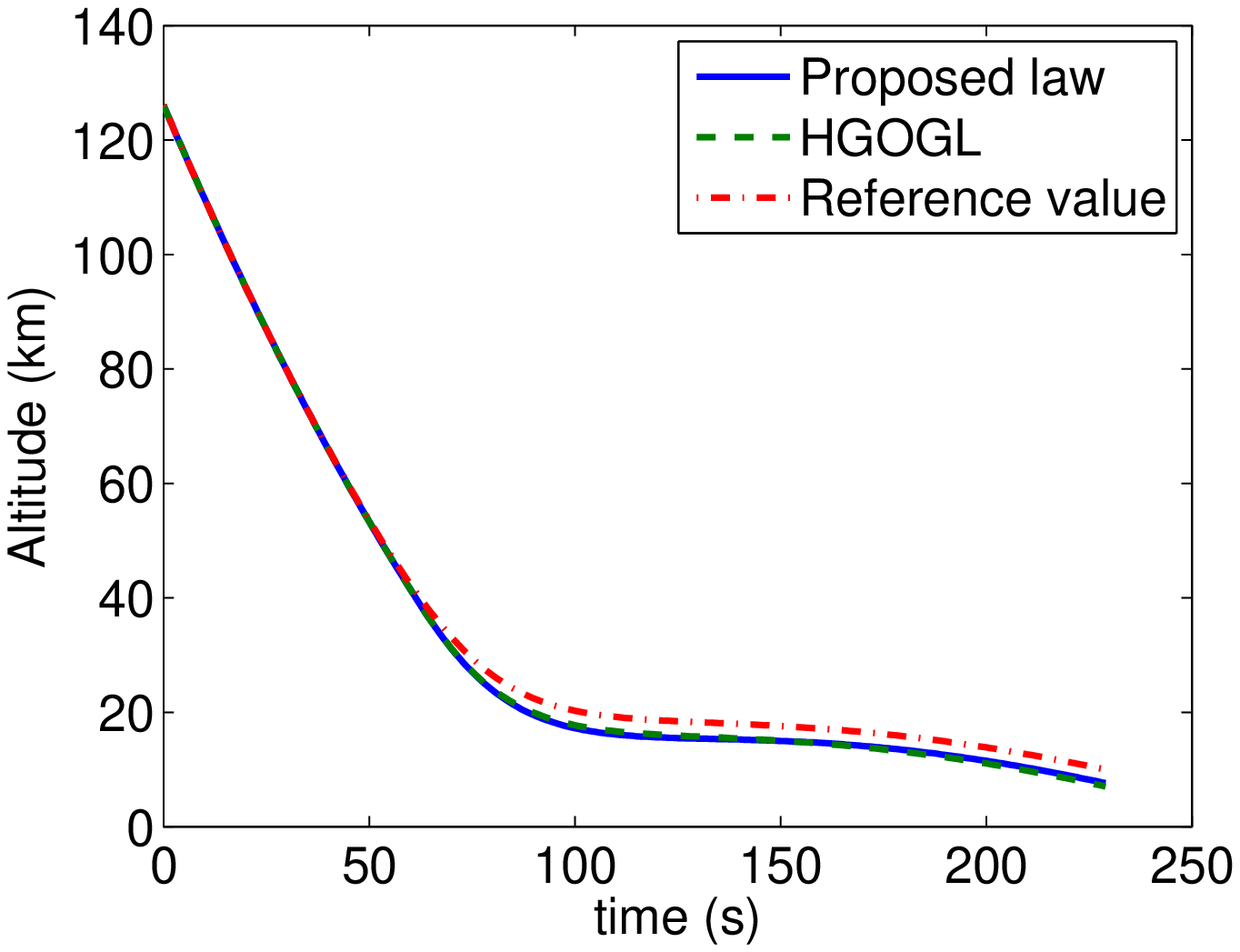}}
  \subfigure[Downrange error]{
    \includegraphics[width=2.5in]{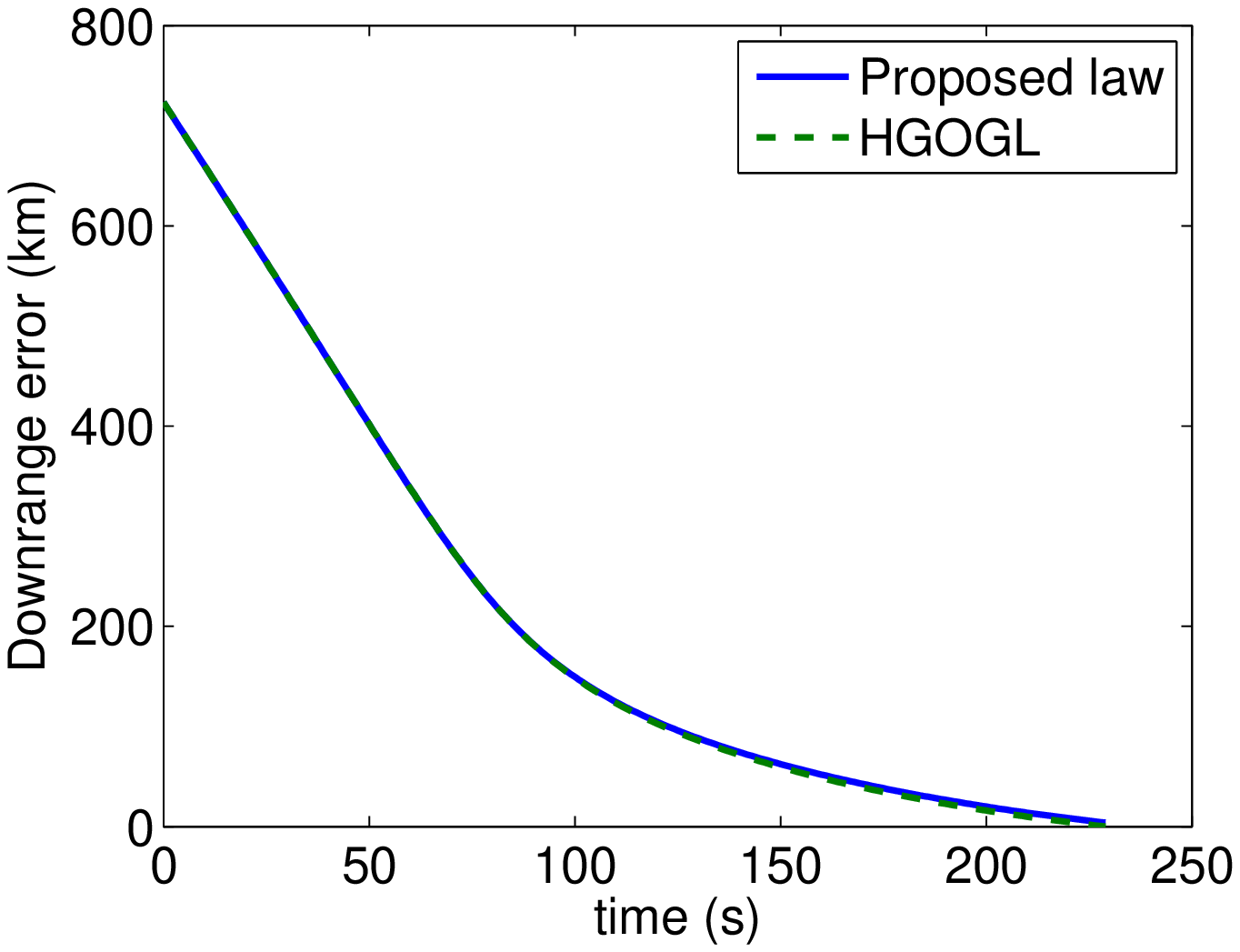}}
  \caption{Altitude and downrange error\label{Altitude}}
\end{figure}

\begin{figure}[H]
  \centering
  \subfigure[$\sigma$]{
    \includegraphics[width=2.5in]{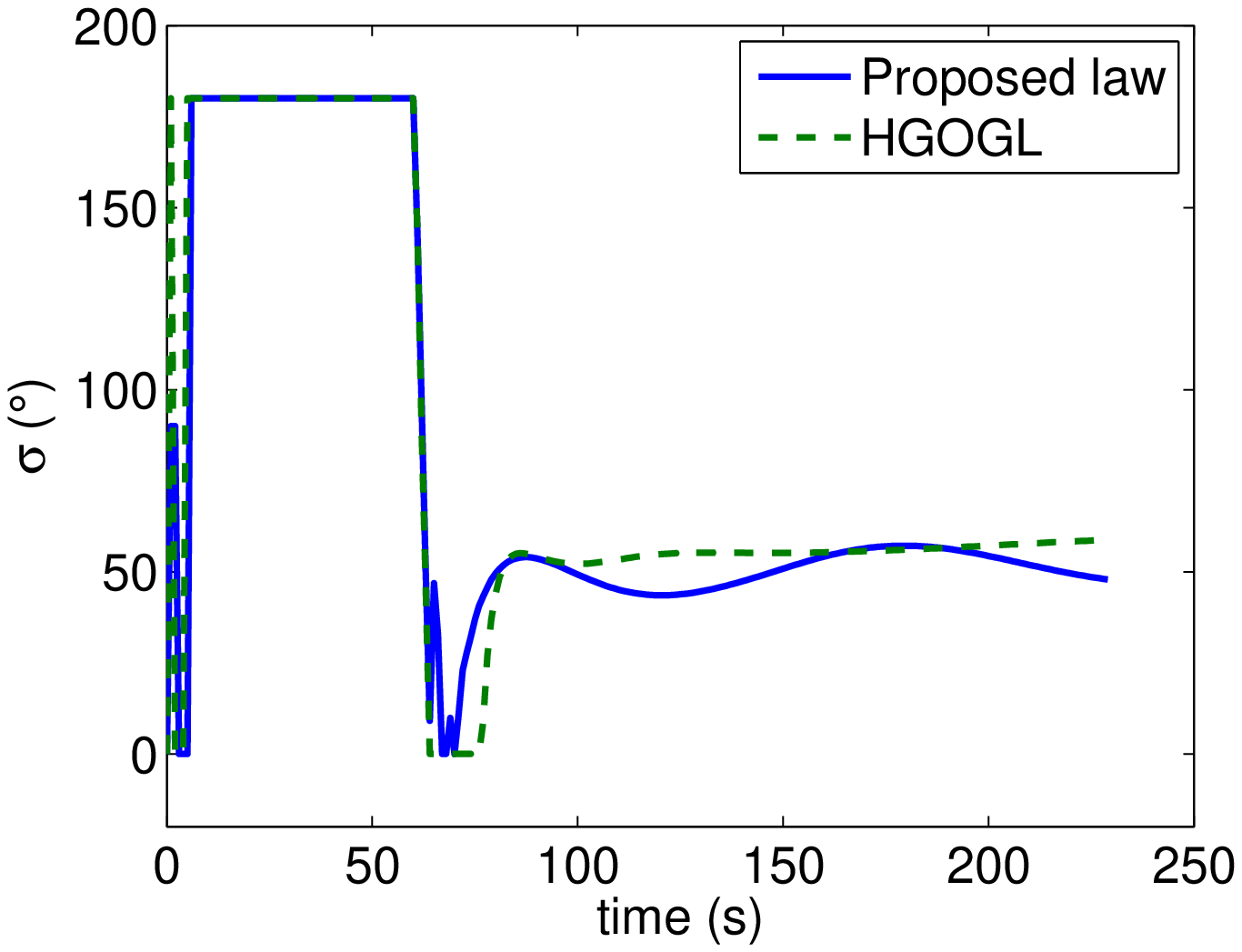}}
  \subfigure[$\gamma$]{
    \includegraphics[width=2.5in]{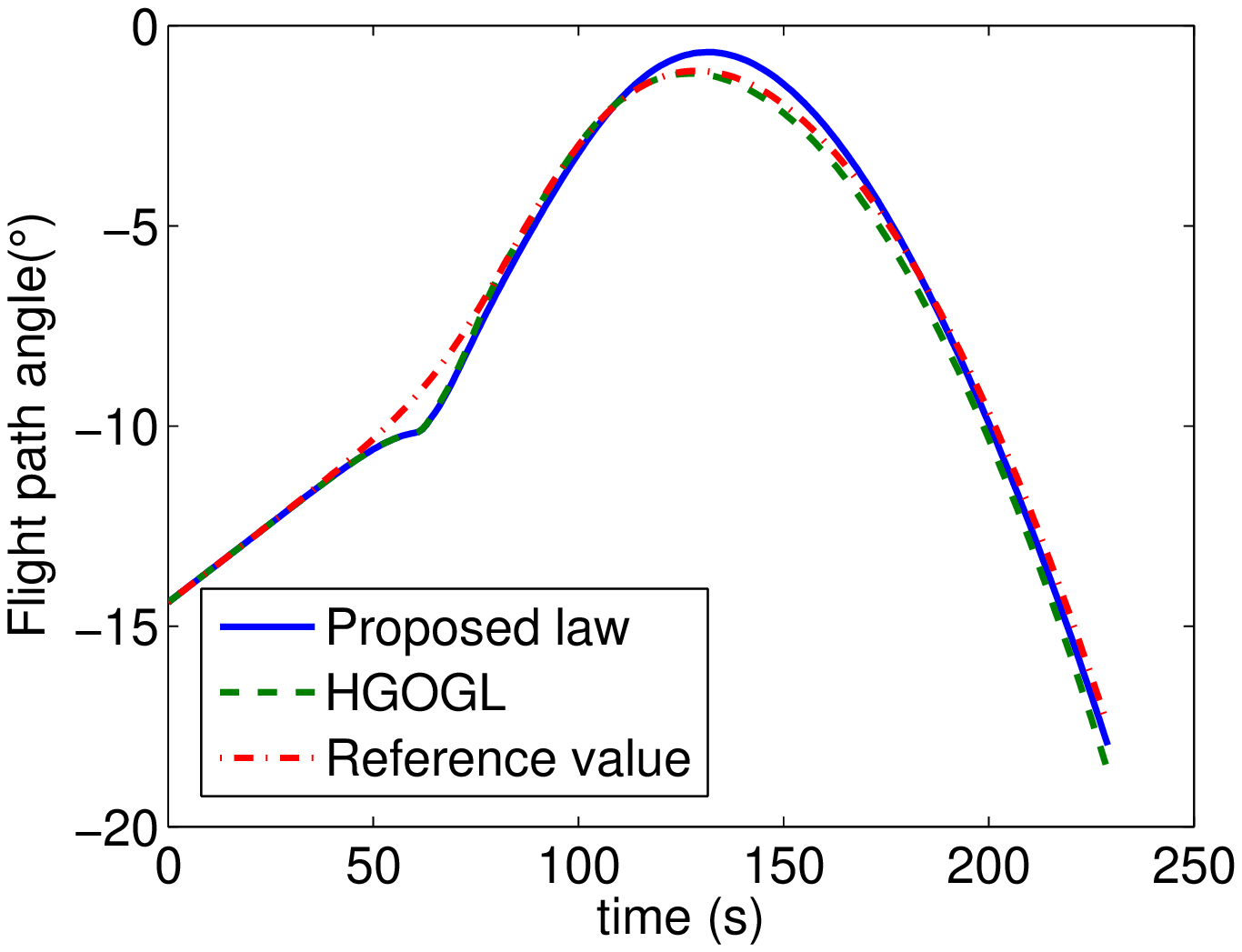}}
  \caption{Bank angle and flight path angel\label{bankandsigma}}
\end{figure}

\begin{figure}[H]
  \centering
  \subfigure[Drag rate with the proposed law]{
    \includegraphics[width=2.5in]{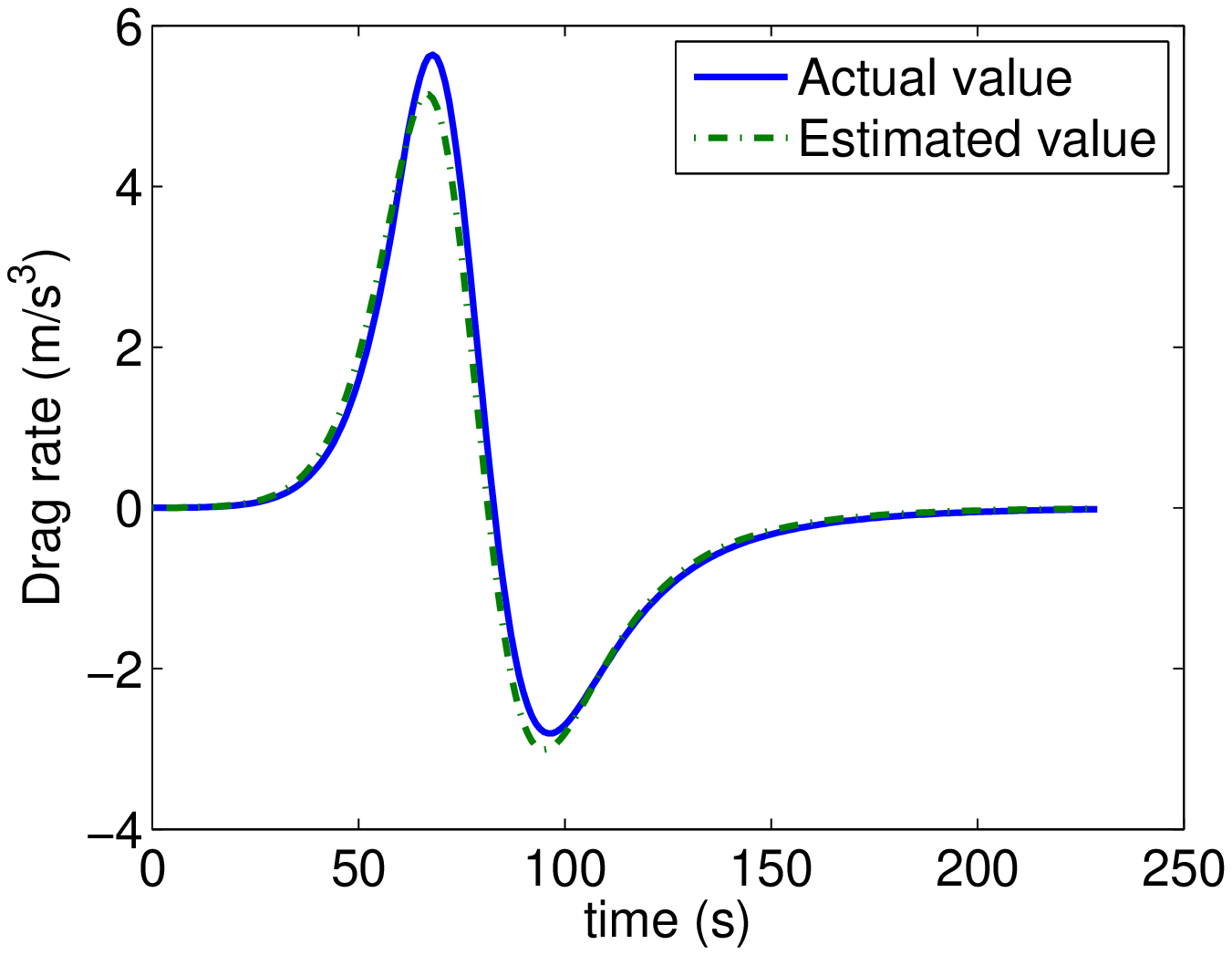}}
  \subfigure[Drag rate estimate error]{
    \includegraphics[width=2.5in]{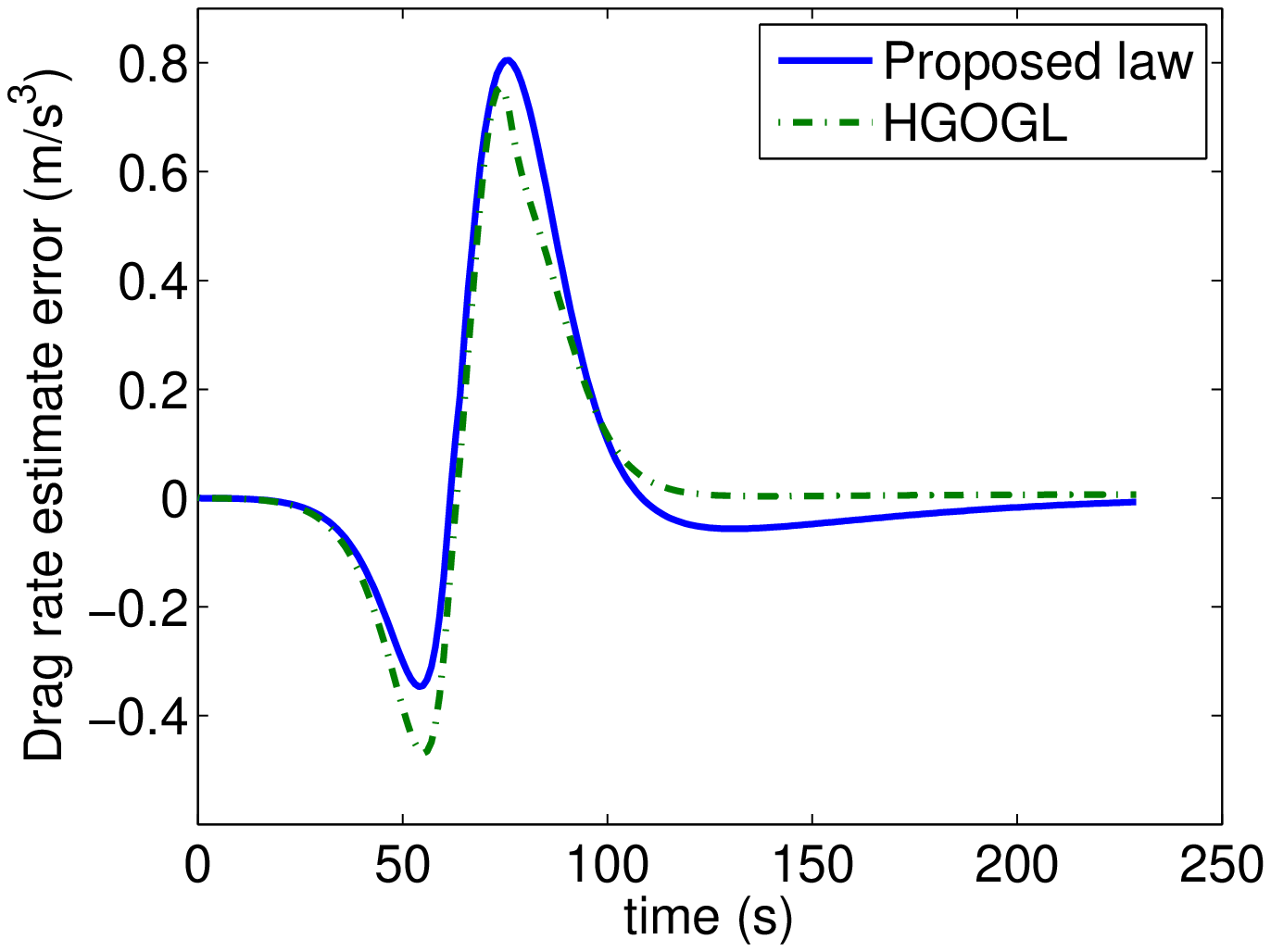}}
  \caption{Drag rate and its estimate error\label{drag_rate}}
\end{figure}

\begin{table}[h]
\begin{center}
\caption{State Variables\label{State}}
\begin{tabular}{c|c}\hline\hline
Initial State Variables & Final State Variables \\ \hline
\begin{tabular}{c|c}
Altitude, $h_0$ (km) & 126.1\\
Relative velocity, $V_0$ (km/s) & 6.75\\
Flight path angle,  $\gamma_{0}$ ($\circ$) & -14.4\\
Longitude ($\circ$) & 0\\
Latitude ($\circ$) & 0
\end{tabular}&
\begin{tabular}{c|c}
Altitude, $h_f$ (km) & 10 \\
Relative Velocity, $V_f$ (m/s) & 503 \\
Flight path angle,  $\gamma_{f}$ ($\circ$) & --- \\
Longitude ($\circ$) & 12.2 \\
Latitude ($\circ$) & 0
\end{tabular} \\ \hline\hline
\end{tabular}
\end{center}
\end{table}

In order to compare the proposed guidance law with the high-gain observer based guidance law (HGOGL) in \cite{Yan}, the performance of the two laws are shown in Figs. \ref{drag}-\ref{drag_rate}. The parameters for the proposed guidance law are taken as $a=1.982,b=3,c=0.1,\varepsilon_0=6.5,h_1=2h_2=2,\varepsilon=1.78,\tau=1,\gamma_x=0.0005,k=1$, and the parameters for HGOGL are the same as \cite{Yan}. We can see that the laws have similarity performance in this case. Since the atmospheric density is very small at the beginning of entry and it leads to the fact that $g_{0}(D,t)=-\left(\frac{v}{h_{s}}+\frac{2g}{v}\right)\frac{LD\cos\gamma}{v}$ is small, thus, a large control magnitude is needed to make the drag track its reference value, which is the reason why bank angle reaches saturation level at initial time with both guidance laws. But it can be seen from Fig. \ref{bankandsigma} that, after about 60s, under the law proposed in this paper, the duration of bank angle reaching the saturation level is much shorter then HGOGL.

\begin{figure}[H]
  \centering
    \subfigure[Downrange error]{
    \includegraphics[width=2.5in]{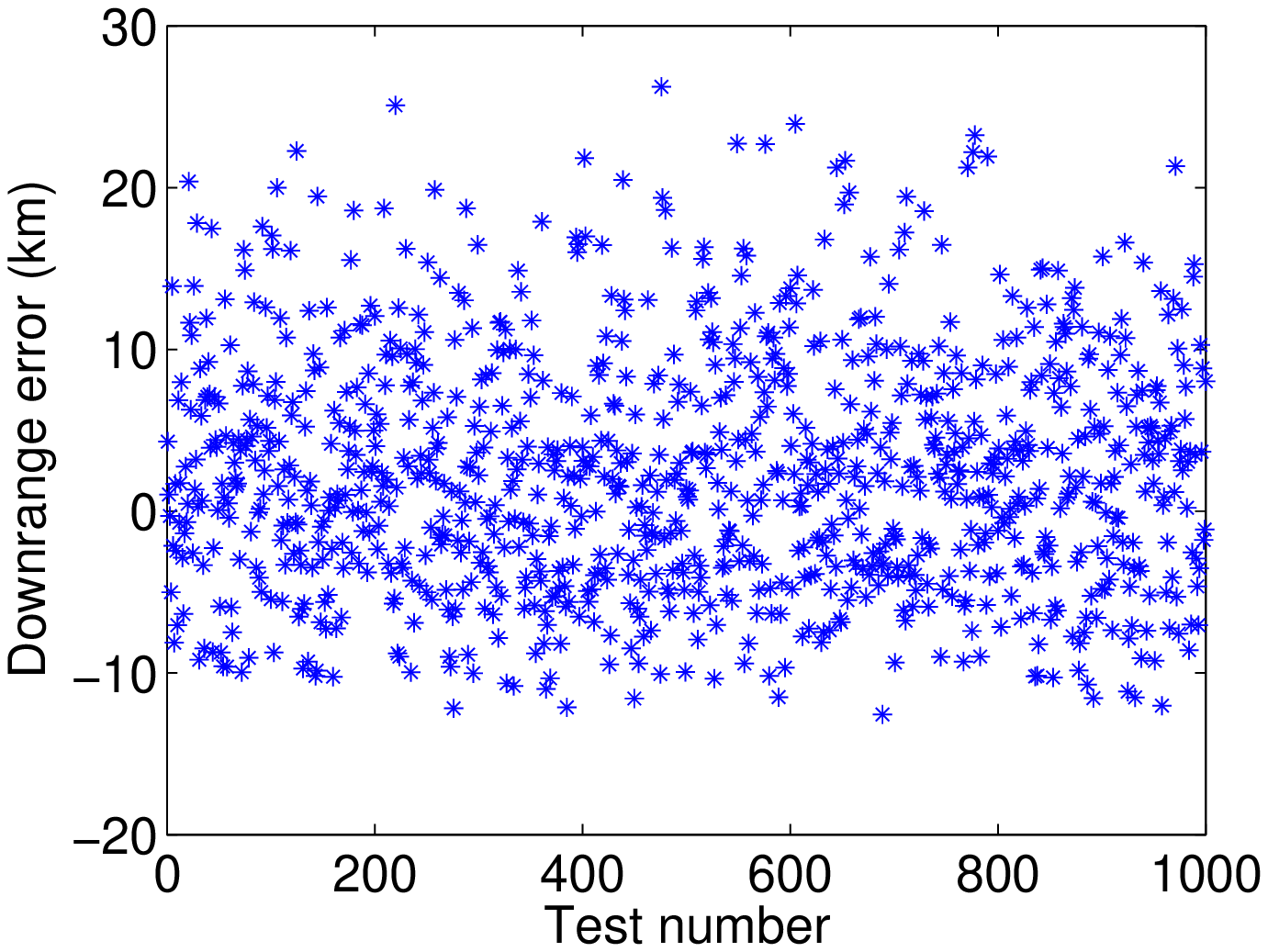}}
    \subfigure[Altitude error]{
    \includegraphics[width=2.5in]{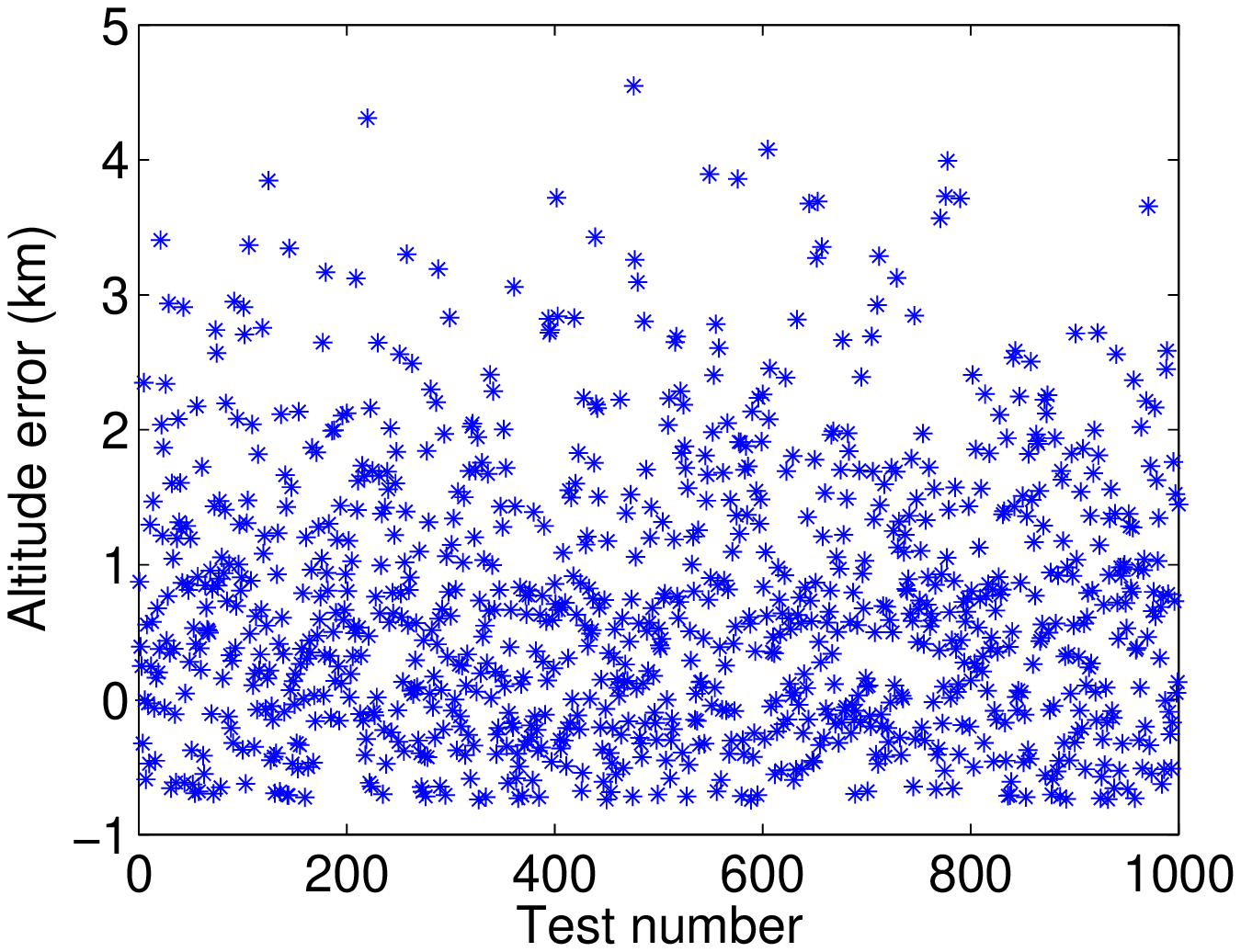}}
  \caption{Monte Carlo study of guidance law in \cite{Yan}\label{daba_duibi}}
\end{figure}

\begin{figure}[H]
  \centering
    \subfigure[Downrange error]{
    \includegraphics[width=2.5in]{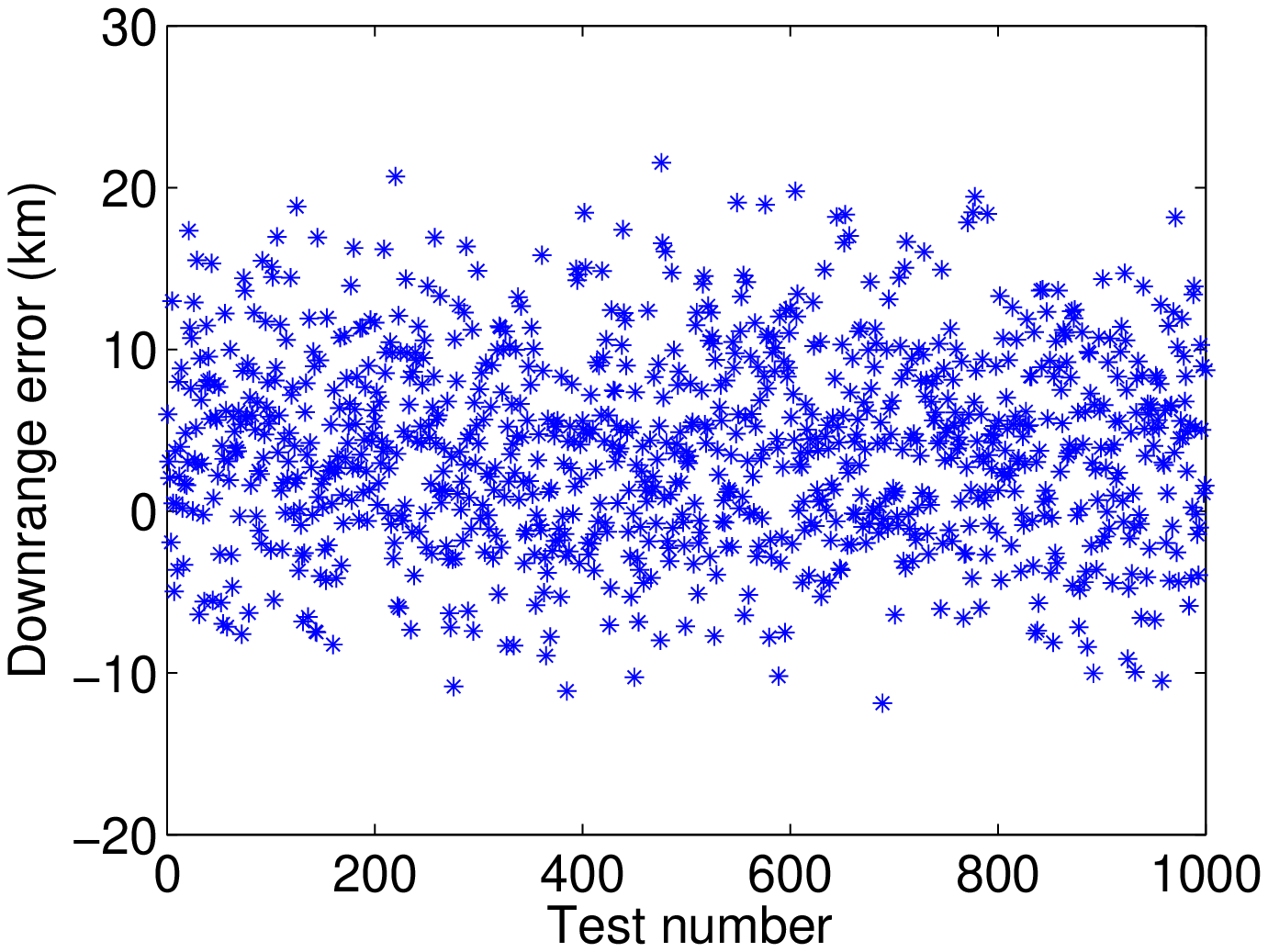}}
    \subfigure[Altitude error]{
    \includegraphics[width=2.5in]{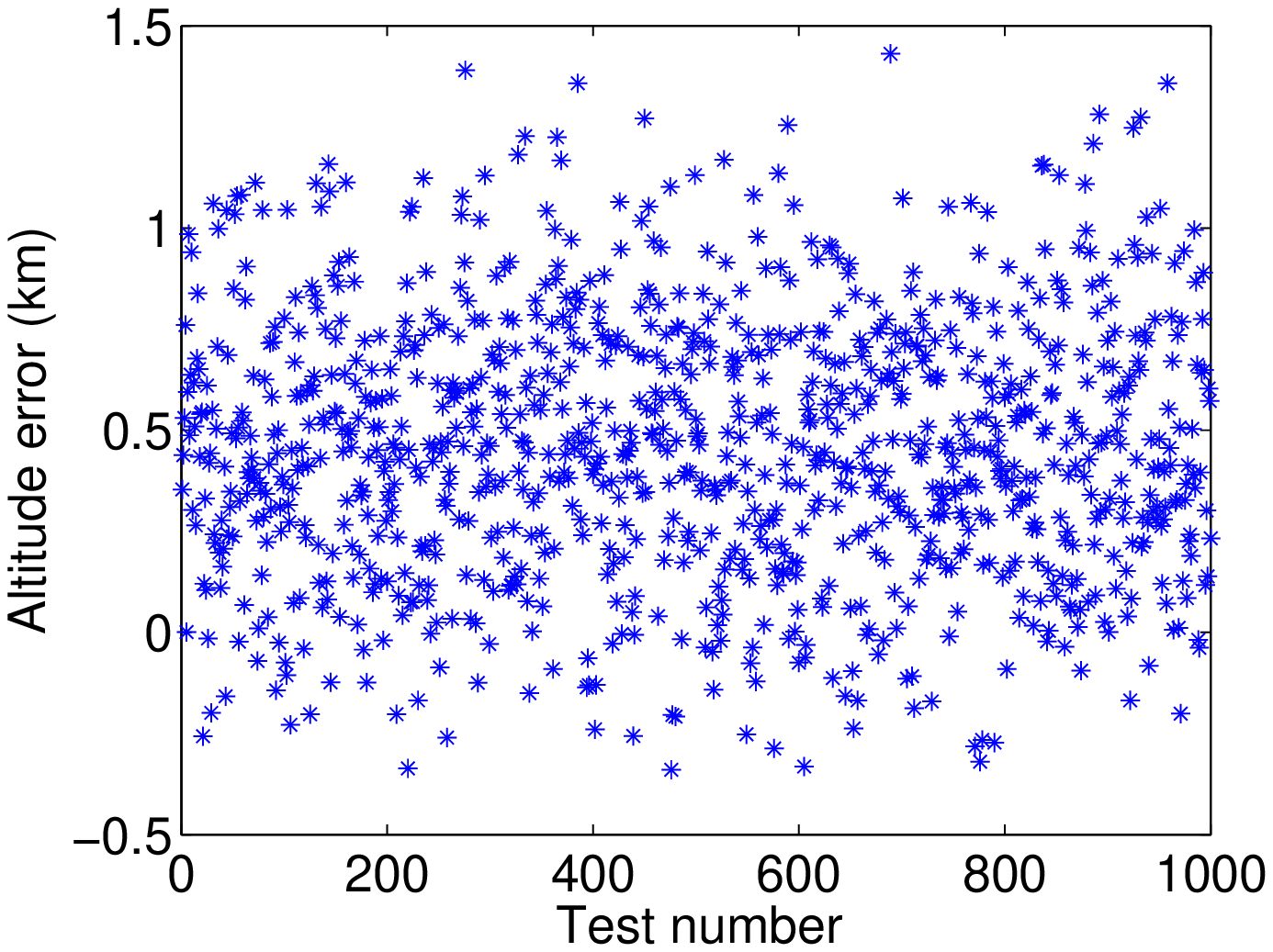}}
  \caption{Monte Carlo study of the proposed guidance law\label{1000daba}}
\end{figure}

\begin{table}[h]
\begin{center}
\caption{Statistics of Dispersions Used in Monte Carlo Study\label{daba}}
\begin{tabular}{c|c|c}\hline\hline
Parameters & Distribution & $[\Delta^{-},\Delta^{+}]$ \\ \hline
Mass deviation & uniform & [-5\%,5\%]\\
Atmospheric density deviation & uniform & [-20\%,20\%]\\
$C_{L}$ deviation & uniform & [-30\%,30\%]\\
$C_{D}$ deviation & uniform & [-30\%,30\%]\\ \hline\hline
\end{tabular}
\end{center}
\end{table}

A 1000-run Monte Carlo study using the parameter deviation in Table \ref{daba} is also done to test the robustness of the proposed guidance law. Take $a=1.982,b=3,c=0.1,\varepsilon_0=6.5,h_1=2h_2=2,\varepsilon=0.425,\tau=1,\gamma_x=0.0005,k=1$, and the result is shown in Fig. \ref{1000daba}. For comparison with the existing work, the Monte Carlo simulation result of the guidance law in \cite{Yan} is also depicted, and the statistical results of these Monte Carlo simulation are summarized in Table \ref{daba result}. Comparing with \cite{Yan}, the integral term is introduced in the proposed guidance law, and it can further eliminate the uncertainties as we analyzed. Thus, we can obviously see from both Figs. \ref{daba_duibi}-\ref{1000daba} and Table \ref{daba result} that the downrange and altitude errors are more close to zero by using the law proposed in this paper.

\begin{table}[h]
\begin{center}
\caption{Result of Monte Carlo Study\label{daba result}}
\begin{tabular}{c|c|c|c|c}\hline\hline
\multirow{2}*{}& \multicolumn{2}{|c|}{Downrange Error (km)} & \multicolumn{2}{|c}{Altitude Error (km)}\\ \cline{2-5}
&Proposed law& The law in \cite{Yan} &Proposed law& The law in \cite{Yan}\\\hline
Minimum & 0.0141 & -0.0191 & 0.000682 & 0.000311\\
Maximum & 21.5357 & 26.2333 & 1.4304 & 4.5474\\
Average & 4.0354 & 2.4492 & 0.4576 & 0.7056\\
Standard deviation & 5.9681 & 7.3655 &  0.3247 & 1.0129\\ \hline\hline
\end{tabular}
\end{center}
\end{table}

\section{Conclusions}\label{conclusion}
This paper proposes a robust output feedback drag-tracking guidance law for entry vehicles in the presence of input saturation. By employing a Nussbaum type function to deal with the problem of input saturation constraint, an output feedback guidance law with high-gain observer is proposed. Comparing with previous work, an integral term is introduced to get a better result, and the stability analysis is also given by considering the input saturation. The simulation results show the advantage in both altitude and downrange control when using the designed law.

\section*{APPENDIX}
\section*{The Proof of Lemma \ref{Nussbaum lemma}}
Define
\begin{equation}\label{Vx}
V_{\mathcal{X}}(t_i,t_j)=\int_{t_i}^{t_j}\left(\rho(\tau) N(\mathcal{X})\dot{\mathcal{X}}-\dot{\mathcal{X}}\right)e^{\lambda (\tau-t_j)}\mathrm{d}\tau
\end{equation}
Due to the fact that
$$0<\rho_{\min}\leq \rho(t)\leq \rho_{\max}$$
and
$$0<e^{\lambda (\tau-t_j)}\leq1$$
for $\tau\in[t_i,t_j]$. Thus
\begin{alignat}{1}
|V_{\mathcal{X}}(t_i,t_j)|\leq& \int_{\mathcal{X}(t_i)}^{\mathcal{X}(t_j)}(\rho_{\max}|N(\mathcal{X})|+1)\mathrm{d}\mathcal{X}\nonumber\\
\leq &(\mathcal{X}(t_j)-\mathcal{X}(t_i))\left(\rho_{\max}\sup_{x\in[\mathcal{X}(t_i),\mathcal{X}(t_j)]}|N(x)|+1\right)\label{Vx_bound}
\end{alignat}
We will seek a contradiction to show that $V_{\mathcal{X}}(t_i,t_j)$ and $\mathcal{X}(t)$ are bounded. It is supposed that $\mathcal{X}(t)$ is unbounded and two cases should be considered: (1) $\mathcal{X}(t)$ has no upper bound, and (2) $\mathcal{X}(t)$ has no lower bound.

Case 1: Suppose that $\mathcal{X}(t)$ has no upper bound on $[0,t_f]$. There must exist a monotone increasing variable $\mathcal{X}$ with $\mathcal{X}_0=\mathcal{X}(t_0)>0,\lim_{i\rightarrow\infty}t_i=t_f,\lim_{i\rightarrow\infty}\mathcal{X}_i=\infty$. From Eq. (\ref{Vx_bound}), for the interval $[\mathcal{X}(t_0),\mathcal{X}(t_1)]=[\mathcal{X}_0,4m+1]$ with positive integer $m$,
\begin{alignat}{1}
V_{\mathcal{X}}(t_0,t_1)
\leq &(\mathcal{X}(t_1)-\mathcal{X}_0)\left(\rho_{\max}\sup_{x\in[\mathcal{X}(t_0),\mathcal{X}(t_1)]}|N(x)|+1\right)\nonumber\\
=&\underbrace{(4m+1-\mathcal{X}_0)}_{c_{x1}}(\rho_{\max}e^{{(4m+1)}^2}+1)\label{V01}
\end{alignat}
Noting the facts that, for $\forall x\in[\mathcal{X}(t_1),\mathcal{X}(t_2)]=(4m+1,4m+3]$ and $\forall \tau\in[t_1,t_2]$, we have $N(x)\leq0$ and $e^{\lambda_1 (\tau-t_2)}\geq e^{\lambda_1 (t_1-t_2)}$, it can be obtained that
\begin{alignat}{1}
V_{\mathcal{X}}(t_1,t_2)
&\leq \int_{4m+2-\varepsilon}^{4m+2+\varepsilon}\left(\rho_{\min} N(\mathcal{X})-1\right)e^{\lambda_1 (\tau-t_2)}\mathrm{d}\mathcal{X}\nonumber\\
&\leq 2\varepsilon\left(\sup_{x\in[4m+2-\varepsilon,4m+2+\varepsilon]}N(x)\rho_{\min}-1\right)e^{\lambda_1 (t_1-t_2)}
\end{alignat}
where $\varepsilon\in(0,1)$. Since $N(\mathcal{X})=e^{\mathcal{X}^2}\cos\left(\frac{\pi}{2}\mathcal{X}\right)$, it can be easily to verify that
\begin{alignat}{1}
V_{\mathcal{X}}(t_1,t_2)
&\leq 2\varepsilon\left(\rho_{\min}N(4m+2-\varepsilon)-1\right)e^{\lambda_1 (t_1-t_2)}\nonumber\\
&=2\varepsilon\left(-\rho_{\min}e^{(4m+2-\varepsilon)^2}\cos\left(\frac{\pi}{2}\varepsilon\right)-1\right)e^{\lambda_1 (t_1-t_2)}\nonumber\\
&=-e^{(4m+2-\varepsilon)^2}\underbrace{2\rho_{\min}\varepsilon\cos\left(\frac{\pi}{2}\varepsilon\right)e^{\lambda_1 (t_1-t_2)}}_{c_{x2}}-\underbrace{2\varepsilon e^{\lambda_1 (t_1-t_2)}}_{c_{x3}}\label{V12}
\end{alignat}
From Eqs. (\ref{V01}) and (\ref{V12}), we have
\begin{alignat}{1}
V_{\mathcal{X}}(t_0,t_2)
&=V_{\mathcal{X}}(t_0,t_1)+V_{\mathcal{X}}(t_1,t_2)\nonumber\\
&\leq-e^{(4m+1)^2}(c_{x2}e^{(8m+3-\varepsilon)(1-\varepsilon)}-c_{x1}\rho_{\max})+c_{x1}-c_{x3}
\end{alignat}
From above equation, it is known that $V_{\mathcal{X}}(t_0,t_2)=V_{\mathcal{X}}(t_0,4m+3)\rightarrow-\infty$ as $m\rightarrow+\infty$, that is, from Eq. (\ref{condition for lemma1}), $V<0$ at this time. On the other hand, $V\geq0$ for $\forall t$. Thus we can always find a subsequence that leads to a contradiction. Therefore, $\mathcal{X}(t)$ has a upper bound.

Case 2: Suppose that $\mathcal{X}(t)$ has no lower bound on $[0,t_f]$. Define $\mathcal{X}(t)=-\nu(t)$, and then, $\nu(t)$ has no upper bound. Consider that $N(\cdot)$ is an even function, thus
\begin{equation}\label{solution of dot(V_(c))2}
0\leq V(t)\leq c_0 - c_1\int_0^t\left(\rho(\tau)N(\nu(\tau))\dot{\nu}(\tau)-\dot{\nu}(\tau)\right)e^{\lambda (\tau-t)}\mathrm{d}\tau
\end{equation}
Define
\begin{equation}
V_{\nu}(t_i,t_j)=\int_{t_i}^{t_j}\left(\rho(\tau)N(\nu)\dot{\nu}-\dot{\nu}\right)e^{\lambda (\tau-t_j)}\mathrm{d}\tau
\end{equation}
There must exist a monotone increasing variable $\nu$ with $\nu_0=\nu(t_0)>0,\lim_{i\rightarrow\infty}t_i=t_f,\lim_{i\rightarrow\infty}\nu_i=\infty$, and following the similar logical deduction as Eq. (\ref{Vx})-(\ref{V01}), for the interval $[\nu(t_0),\nu(t_1)]=[\nu_0,4m-1]$ with positive integer $m$, we have
\begin{alignat}{1}
V_{\nu}(t_0,t_1)
\geq &-(\nu(t_1)-\nu_0)\left(\rho_{\max}\sup_{x\in[\nu(t_0),\nu(t_1)]}|N(x)|+1\right)\nonumber\\
=&-\underbrace{(4m-1-\nu_0)}_{c_{\nu1}}(\rho_{\max}e^{{(4m-1)}^2}+1)\label{V012}
\end{alignat}
Noting the facts that, for $\forall x\in[\nu(t_1),\nu(t_2)]=(4m-1,4m+1]$ and $\forall \tau\in[t_1,t_2]$, we have $N(x)\geq0$ and $0>e^{\lambda_1 (\tau-t_2)}\geq e^{\lambda_1 (t_1-t_2)}$, it can be obtained that
\begin{alignat}{1}
V_{\nu}(t_1,t_2)
&\geq \int_{4m-\varepsilon}^{4m+\varepsilon}\left(\rho_{\min}N(\nu)-1\right)e^{\lambda_1 (\tau-t_2)}\mathrm{d}\nu\nonumber\\
&\geq 2\varepsilon\rho_{\min}e^{\lambda_1 (t_1-t_2)}\inf_{x\in[4m-\varepsilon,4m+\varepsilon]}N(x)-2\varepsilon
\end{alignat}
where $\varepsilon\in(0,1)$. Since $N(\mathcal{X})=e^{\mathcal{X}^2}\cos\left(\frac{\pi}{2}\mathcal{X}\right)$, it can be easily to verify that
\begin{alignat}{1}
V_{\nu}(t_1,t_2)
&\geq 2\varepsilon\rho_{\min}e^{\lambda_1 (t_1-t_2)}N(4m-\varepsilon)-2\varepsilon\nonumber\\
&=e^{(4m-\varepsilon)^2}\underbrace{2\rho_{\min}\varepsilon\cos\left(\frac{\pi}{2}\varepsilon\right)e^{\lambda_1 (t_1-t_2)}}_{c_{\nu2}}-\underbrace{2\varepsilon}_{c_{\nu3}}\label{V122}
\end{alignat}
From Eqs. (\ref{V012}) and (\ref{V122}), we have
\begin{alignat}{1}
V_{\nu}(t_0,t_2)
&=V_{\nu}(t_0,t_1)+V_{\nu}(t_1,t_2)\nonumber\\
&\geq e^{(4m-1)^2}(c_{\nu2}e^{(8m-1-\varepsilon)(1-\varepsilon)}-\rho_{\max}c_{\nu1})-c_{\nu1}-c_{\nu3}
\end{alignat}
From above equation, it is known that $V_{\nu}(t_0,t_2)=V_{\mathcal{X}}(t_0,4m+1)\rightarrow+\infty$ as $m\rightarrow+\infty$, that is, from Eq. (\ref{solution of dot(V_(c))2}), $V<0$ at this time. On the other hand, $V\geq0$ for $\forall t$. Thus we can always find a subsequence that leads to a contradiction. Therefore, $\mathcal{X}(t)$ has a lower bound.

Accordingly, we can conclude that $V(t)$ and $\int_0^t\left(\rho(\tau)N(\mathcal{X})\dot{\mathcal{X}}-\dot{\mathcal{X}}\right)e^{\lambda (\tau-t)}\mathrm{d}\tau$ are also bounded due to Eqs. (\ref{Vx_bound}) and (\ref{condition for lemma1}).

\end{document}